\definecolor{brown}{rgb}{0.85,0.33,0.1}
\definecolor{green}{rgb}{0.47,0.6,0.19}
\definecolor{purple}{rgb}{0.49,0.18,0.56}
\newcommand{\wqb}{\textcolor{black}}
\newcommand{\wqg}{\textcolor{black}}
\newcommand{\wqm}{\textcolor{black}}
\newtheorem{Thm}{Theorem}
\newtheorem{Lem}{Lemma}
\newtheorem{Prob}{Problem}
\DeclareRobustCommand{\stirling}{\genfrac\{\}{0pt}{}}
\begin{document}
\bibliographystyle{IEEEtran}

\setcounter{page}{1}
\title{Optimization-based Decentralized Coded Caching for Files and Caches with Arbitrary Sizes}
\author{Qi Wang,
		Ying Cui,
		Sian Jin,
		Junni Zou,
		Chenglin Li,
		Hongkai Xiong
\thanks{Q. Wang, \ Y. Cui, \ S. Jin, \  J. Zou,  \ C. Li, \ H. Xiong are with Shanghai Jiao Tong University, China.}
}

\maketitle
\thispagestyle{headings}

\begin{abstract}
Existing decentralized coded caching solutions cannot guarantee small loads in the general scenario with arbitrary file sizes and cache sizes.
In this paper, we propose an optimization framework for decentralized coded caching \wqb{in the general scenario} to minimize the worst-case load and average load (under an arbitrary file popularity), respectively.
Specifically, we first propose a class of decentralized coded caching schemes \wqb{for the general scenario,} which are specified by a general caching parameter and include several known schemes as special cases.
Then, we optimize the caching parameter to minimize the worst-case load and average load, respectively.
\wqb{Each of the two optimization problems} is a challenging nonconvex problem with a nondifferentiable objective function.
\wqb{For each optimization problem}, we \wqb{develop} an iterative algorithm to obtain a stationary point using techniques for solving Complementary Geometric Programming (GP).
We also obtain a low-complexity \wqb{approximate} solution by solving an approximate problem with a differentiable objective function which is an upper bound \wqb{on} the original nondifferentiable one, and characterize the performance loss caused by the approximation. 
\wqb{Finally,} we present two information-theoretic converse bounds on the worst-case load and average load (under an arbitrary file popularity) in the general scenario, respectively.
To the best of our knowledge, this is the first work \wqb{that provides optimization-based decentralized coded caching schemes and information-theoretic converse bounds for the general scenario.}
\end{abstract}

\begin{keywords}
Coded caching, content distribution, arbitrary file sizes, arbitrary cache sizes, \wqb{optimization}.
\end{keywords}
\newpage
\section{Introduction}\label{Sec:Introduction}
%
%
%
%
%
Recently, a new class of caching schemes for content placement in user caches, referred to as {\em coded caching}~\cite{AliFundamental},  have received significant interest.
In~\cite{AliFundamental}, Maddah-Ali and Niesen consider a system consisting of one server with access to a library of files and connected  through a shared, error-free link to multiple users each with a cache.
Each user can obtain the requested file based on the received multicast message and the contents stored in its cache.
\wqg{They formulate a caching problem, consisting of two phases, i.e., uncoded content placement and coded content delivery, which has been successfully investigated in a large number of recent works\wqb{~\cite{AliFundamental,jin,Wei_Yu,AliDec, NonuniformDemands,ji2015order,zhang2018coded,Sinong,wang2015coded,zhang2015coded,cheng2017optimal}} under the same network setting.}

\wqb{In~\cite{AliFundamental,jin,Wei_Yu}, the authors propose centralized coded caching schemes to minimize the worst-case load (over all possible requests)~\cite{AliFundamental} or average load (over random requests)~\cite{jin,Wei_Yu} of the shared link in the delivery phase.
Specifically,} in~\cite{AliFundamental}, Maddah-Ali and Niesen propose a centralized coded caching scheme to minimize the worst-case load \wqg{and \wqb{show that it} achieves order-optimal memory-load tradeoff}.
In~\cite{jin}, Jin {\em et al.} consider a class of centralized  coded caching schemes specified by a  general file partition parameter, and optimize the parameter to minimize the average load  within the class under an arbitrary file popularity.
\wqg{In~\cite{Wei_Yu}, \wqb{a} parameter-based coded caching design approach \wqb{similar to the one} in~\cite{jin} is \wqb{adopted} to minimize the average load under an arbitrary file popularity \wqb{in the scenario with arbitrary file sizes and two different cache sizes}.}
Centralized coded caching schemes  have limited practical applicability, as they require a centrally coordinated placement phase depending on the \wqb{exact} number of active users in the delivery phase, which is \wqb{actually} not known \wqb{when placing content in a practical network}.

Decentralized coded caching schemes\wqm{\cite{AliDec,wang2015coded,zhang2015coded,cheng2017optimal,NonuniformDemands, ji2015order, zhang2018coded, Sinong}}, \wqb{where} the \wqb{exact} number of active users in the delivery phase \wqb{is not required and the cache of each user is filled independently of the other users}, are then considered \wqb{to minimize the worst-case load~\cite{AliDec,wang2015coded,zhang2015coded,cheng2017optimal} or average load~\cite{NonuniformDemands, ji2015order, zhang2018coded, Sinong}}.
In~\cite{AliDec}, Maddah-Ali and Niesen \wqb{propose a decentralized coded caching scheme to minimize the worst-case load and show that it \wqg{achieves order-optimal memory-load trade-off.}} 
\wqg{In~\cite{NonuniformDemands}, files \wqb{are partitioned} into multiple groups and \wqb{the} decentralized coded caching scheme in~\cite{AliDec} is applied to each group \wqb{to reduce the average load}.
As coded-multicasting opportunities for files from different groups are not explored, \wqb{the resulting average load may not be desirable.}
In~\cite{ji2015order}, \wqb{the authors optimize the memory allocation for files \wqm{by minimizing} \wqg{an upper bound on the average load}.}
\wqb{As the optimization problem} is highly non-convex and not amenable to analysis, \cite{ji2015order} proposes a simpler suboptimal scheme, referred to as the RLFU-GCC scheme, where all files are partitioned into two groups and the $GCC_1$ procedure is applied to the group of popular files.
In~\cite{zhang2018coded}, inspired by the RLFU-GCC scheme in~\cite{ji2015order}, Zhang \emph{et al.} present a \wqb{decentralized coded caching} scheme, \wqg{where all files are partitioned into two groups and the delivery procedure of Maddah-Ali-Niesen's decentralized scheme is applied to the group of popular files}. 
In~\cite{Sinong}, Wang {\em et al.} formulate a   coded caching design problem  to minimize the average load by  optimizing  the  cache memory allocation for files.
The optimization problem is nonconvex, and a low-complexity \wqb{approximate} solution is obtained by solving an approximate convex problem \wqb{of the original problem}.}
Note that~\cite{AliDec, NonuniformDemands, ji2015order, zhang2018coded, Sinong} all consider the scenario where all files have the same file size and all users have the same cache size, and the resulting schemes may not achieve small loads \wqb{when} file sizes or cache sizes \wqb{are different}.
\wqb{In contrast}, \cite{wang2015coded} considers the scenario with arbitrary cache sizes and the same file size, while~\cite{zhang2015coded} and \cite{cheng2017optimal} consider the scenario with arbitrary file sizes and the same cache size.
\wqb{More specifically}, the decentralized coded caching schemes proposed in~\cite{wang2015coded} and~\cite{zhang2015coded} are not optimization based, and the decentralized coded caching scheme in~\cite{cheng2017optimal} is obtained by solving an approximate problem \wqb{of the worst-case load minimization problem} without any performance guarantee.
Thus, the existing solutions in~\cite{wang2015coded,zhang2015coded,cheng2017optimal} \wqb{may not} guarantee small loads in the general scenario with arbitrary file sizes and cache sizes.

\wqg{Besides achievable schemes, \wqb{information-theoretic converse bounds on the worst-case load~\cite{AliFundamental,wang2015coded,zhang2015coded,yu2018exact,wang2016new,sengupta2015improved,ghasemi2017improved,yu2018characterizing,Chien_Yi_Wang2018improved} and average load~\cite{NonuniformDemands,ji2015order,zhang2018coded,Sinong,jin,yu2018exact,wang2016new,yu2018characterizing,Chien_Yi_Wang2018improved} for coded caching are presented.}
The converse bounds \wqb{in~\cite{AliFundamental,wang2015coded,zhang2015coded, sengupta2015improved,NonuniformDemands,ji2015order,zhang2018coded,Sinong,yu2018exact,ghasemi2017improved,wang2016new,Chien_Yi_Wang2018improved, yu2018characterizing,jin}} can be classified into two classes, i.e., class i): bounds that are applicable only to uncoded placement and class ii): bounds that are applicable to any placement (including uncoded placement and coded placement).
The \wqb{converse} bound in~\cite{yu2018exact} belongs to class i) and is exactly tight for both the worst-case load and average load.
The \wqb{converse} bounds \wqb{derived} based on reduction from an arbitrary file popularity to the uniform file popularity~\cite{NonuniformDemands,ji2015order,zhang2018coded,jin}, cut-set~\cite{AliFundamental,wang2015coded,zhang2015coded,Sinong},\wqb{\cite{sengupta2015improved}}, \wqb{association with a combinatorial problem of optimally labeling the leaves of a directed tree~\cite{ghasemi2017improved}}, relation between a multi-user single request caching network and a single-user multi-request caching network~\cite{wang2016new}, and other information-theoretic approaches~\cite{Chien_Yi_Wang2018improved,yu2018characterizing}, belong to class ii).
\wqb{Note that} the \wqb{converse} bounds in~\cite{AliFundamental,wang2015coded, sengupta2015improved,NonuniformDemands,ji2015order,zhang2018coded,Sinong,yu2018exact,wang2016new,ghasemi2017improved,Chien_Yi_Wang2018improved, yu2018characterizing,jin} \wqb{are for the scenario with the same file size and the converse bounds in~\cite{AliFundamental,zhang2015coded, sengupta2015improved,NonuniformDemands,ji2015order,zhang2018coded,Sinong,yu2018exact,wang2016new,ghasemi2017improved,Chien_Yi_Wang2018improved, yu2018characterizing,jin} are for the scenario with the same cache size, and hence cannot bound the minimum worst-case load or average load in the general scenario with arbitrary file sizes and arbitrary cache sizes.}}

In this paper, we consider the general scenario with arbitrary file sizes and cache sizes, and propose an optimization framework for decentralized coded caching in the general scenario to minimize the worst-case load \wqb{and average load}.
We also present two information-theoretic converse bounds on the worst-case load and average load (under an arbitrary file popularity), respectively, \wqg{applicable to any placement} in the general scenario.
\wqb{To our knowledge, this is the first work that provides optimization-based decentralized coded caching schemes and information-theoretic converse bounds for the general scenario.}
\wqg{Our detailed contributions are summarized below.}

\begin{itemize}
\item 
We propose a class of decentralized coded caching schemes utilizing general uncoded placement and a specific coded delivery, which are specified by a general caching parameter.
The considered class of decentralized coded caching schemes include the schemes in~\cite{AliDec, NonuniformDemands, ji2015order, zhang2018coded, Sinong} \wqb{(designed for the scenario with the same file size and cache size)}, the scheme in~\cite{wang2015coded} \wqb{(designed for the scenario with arbitrary cache sizes and the same file size)} and the schemes in~\cite{zhang2015coded} and \cite{cheng2017optimal} \wqb{(designed for the scenario with arbitrary file sizes and the same cache size)} as special cases.
Then, we formulate two \wqb{parameter-based} coded caching design optimization problems over the considered class of schemes to minimize the worst-case load \wqb{and average load, respectively}, \wqg{by optimizing the caching parameter.}
\wqb{Each problem is a challenging nonconvex problem with a nondifferentiable objective function.} 
\item 
\wqb{For each optimization problem,} we \wqb{develop} an iterative algorithm to obtain a stationary point by equivalently transforming the nonconvex problem to a Complementary Geometric Programming (GP) and using techniques for Complementary GP.
We also obtain a low-complexity \wqb{approximate} solution \wqb{with performance guarantee}, by bounding the original nondifferentiable objective function with two differentiable functions and solving an approximate problem with the differentiable upper bound as the objective function. 
\item 
\wqb{We present two information-theoretic converse bounds on the worst-case load and average load (under an arbitrary file popularity), respectively, \wqg{which are applicable to any placement (\wqg{including uncoded placement and coded placement})} in the general scenario.}
\wqb{In the scenario with the same file size and the same cache size, the proposed converse bounds on the worst-case load and average load reduce to the \wqm{two bounds} in~\cite{Chien_Yi_Wang2018improved} and \wqb{the one} in~\cite{jin}, respectively.}
\item
Numerical results show that the proposed solutions \wqg{achieve significant gains} over the schemes in \cite{AliDec,wang2015coded,zhang2015coded,cheng2017optimal} and the schemes in \cite{NonuniformDemands, ji2015order, zhang2018coded, Sinong} in the general scenario in terms of the \wqb{worst-case load} and average load, \wqb{respectively}.
\wqb{These results highlight the importance of designing optimization-based decentralized coded caching schemes for the general scenario.}
\end{itemize}
\begin{figure}
\begin{center}
  {\resizebox{7.5cm}{!}{\includegraphics{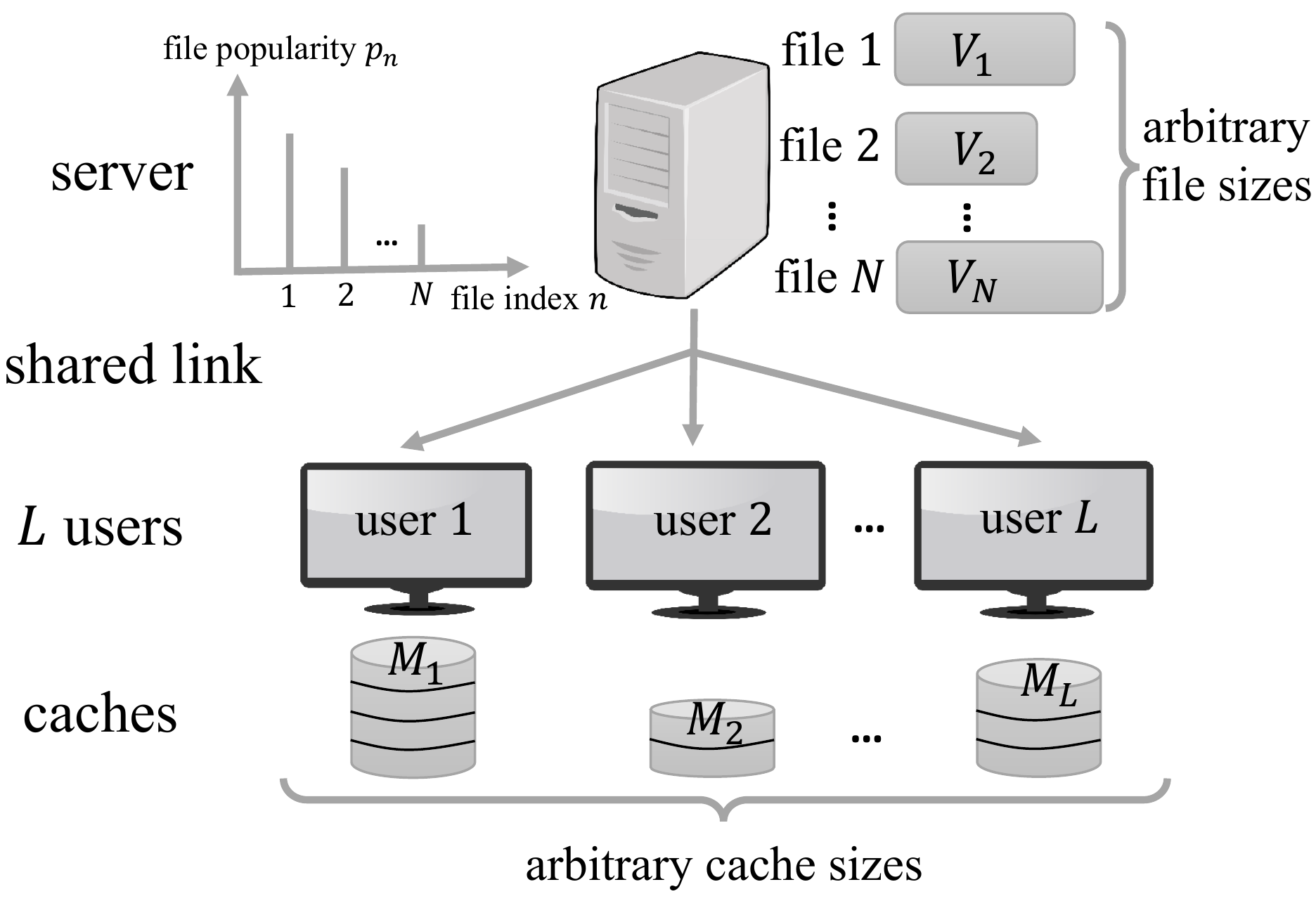}}}
         \caption{\small{\wqm{Problem setup in the general scenario with arbitrary file sizes and cache sizes.}}
         \vspace{-7mm}
         }\label{fig:system_model}
\end{center}
\end{figure}
\vspace{-2mm}
\section{Problem Setting}\label{Sec:Problem}
We consider a system with one server connected through a shared, error-free link to $L \in \mathbb N$  users \wqb{(including both active and inactive ones),} each with an isolated cache memory (see Fig.~\ref{fig:system_model})~\cite{AliDec}, where $\mathbb N$   denotes the set of all positive integers.\footnote{Note that the server can be a base station, and each user can be a mobile device or a small base station.}
Let $\mathcal{L} \triangleq \{1,2, \ldots L\}$ denote the set of user indices.
User $l\in \mathcal{L}$ has an isolated cache memory of size $M_l$ data units, where $M_l \in \left[0,\sum_{n=1}^N V_n\right].$
Let  $\mathbf{M}\triangleq (M_l)_{l \in \mathcal L}$ denote the cache sizes of all $L$ users.
Let $T$ denote the cardinality of set $\{M_l:l\in\mathcal{L}\}$.
That is, there are $T$ different values of the cache sizes for \wqb{the} $L$ users, denoted by $\overline{M}_1,\overline{M}_2,\dots,\overline{M}_T$.
Without loss of generality, we assume $\overline{M}_1< \overline{M}_2< \dots< \overline{M}_T$.
Denote $\mathcal{T}\triangleq \{1,2,\dots,T\}$.
\wqb{For all $t\in\mathcal{T}$,} let $\mathcal{L}_t\triangleq \big\{l\in\mathcal{L}:M_l=\overline{M}_t\big\}$ denote the set of user indices for the users whose cache sizes are $\overline{M}_t$.
\wqb{Let $L_t\triangleq |\mathcal{L}_t|$ denote the cardinality of set $\mathcal{L}_t$.}
The server has access to a library of $N \in \mathbb N$ files, denoted by $W_1,\dots,W_N$.
Let $\mathcal{N}\triangleq \{1,2,\ldots ,N\}$ denote the set of file indices. 
File $n\in\mathcal{N}$ consists of $V_n$ indivisible data units.
Let  $\mathbf{V}\triangleq (V_n)_{n \in \mathcal N}$ denote the file sizes of all $N$ files.
We assume that each user randomly and independently requests a file in $\mathcal{N}$ according to an arbitrary file popularity.
In particular, a user requests file $n$ with probability $p_n\in [0,1]$, where $n\in\mathcal{N}$.
Thus, the file popularity distribution is given by $\mathbf{p}\triangleq \left(p_n\right)_{n\in\mathcal{N}}$, where $\sum_{n=1}^N p_n=1$.
In addition, without loss of generality, we assume $p_1\ge p_2\ge \dots \ge p_N$.
Note that in this paper, we consider the general scenario with arbitrary file sizes and cache sizes ($V_n,n\in\mathcal{N}$ can be different and $M_l,l\in\mathcal{L}$ can be different, \wqb{i.e., $T\ge 1$}), which includes the scenarios considered in~\cite{AliFundamental,jin,AliDec,NonuniformDemands,ji2015order, zhang2018coded, Sinong} ($V_n,n\in\mathcal{N}$ are the same and $M_l,l\in\mathcal{L}$ are the same, i.e., $T=1$), \cite{wang2015coded} ($V_n,n\in\mathcal{N}$ are the same), \cite{zhang2015coded, cheng2017optimal} ($M_l,l\in\mathcal{L}$ are the same, i.e., $T=1$) and \cite{Wei_Yu} ($T=2$) as special cases.

The system operates in two phases, i.e., a placement phase and a delivery phase~\cite{AliDec}.
In the placement phase, \wqb{all} users in $\mathcal{L}$ are given access to the entire library of $N$ files.
Each user fills its cache by using the library of $N$ files in a decentralized manner (which will be illustrated in Section~\ref{Sec:Parameter-based}).
Let $\phi_l$ denote the caching function of user $l\in \mathcal{L}$, which maps the files $W_1, \ldots ,W_N$ into the cache content $Z_l\triangleq \phi_l(W_1, \ldots ,W_N)$ for user $l\in \mathcal{L}$, where $Z_l$ is of size $M_l$ data units.
Let $\pmb{\phi} \triangleq(\phi_1,\dots,\phi_L)$ denote the caching functions of all the $L$ users.
In the delivery phase, a subset of \wqb{$L_a$ users of $\mathcal{L}$} are active, denoted by $\mathcal{L}_a\subseteq\mathcal{L}$, and each active user in $\mathcal{L}_a$ randomly and independently requests one file in $\mathcal N$ according to the file popularity distribution $\mathbf{p}$.
\wqb{For all $t\in\mathcal{T}$,} let $\mathcal{L}_{a,t}\triangleq\big\{l\in\mathcal{L}_a:M_l=\overline{M}_t\big\}$ denote the set of user indices for the active users whose cache sizes are $\overline{M}_t$.
Let $L_{a,t}=|\mathcal{L}_{a,t}|$ denote the cardinality of \wqb{set} $\mathcal{L}_{a,t}$.
Thus, $L_a=\sum_{t\in\mathcal{T}} L_{a,t}$. 
Denote $\mathbf{L}_a\triangleq (L_{a,t})_{t\in\mathcal{T}}$.
Let $d_{l}\in\mathcal N$ denote the index of the file requested by \wqb{active} user $l \in \mathcal{L}_a $, and
let $\mathbf{d}_a\triangleq \left(d_l\right)_{l\in\mathcal{L}_a}\in \mathcal{N}^{L_a}$ denote the requests of  the $L_a$ active users.
The server replies to these $L_a$ requests by sending a multicast message over the shared link, observed by all $L_a$ active users.
Let $\psi_a$ denote the server encoding function, which maps the files $W_1, \ldots ,W_N$, cache contents $\mathbf{Z}_a\triangleq \big(Z_l\big)_{l\in\mathcal{L}_a}$ of the $L_a$ active users and requests $\mathbf{d}_a$ into the multicast message $Y\triangleq\psi_a(W_1, \ldots ,W_N,\mathbf{Z}_a,\mathbf{d}_a)$ sent by the server over the shared link.
Let $\mu_l$ denote the decoding function for \wqb{active} user $l\in\mathcal{L}_a$, which maps the multicast message $Y$ received over the shared link, the cache content $Z_l$ and the request $d_l$ to the estimate $\widehat{W}_{d_l}$ of the requested file $W_{d_l}$ of \wqb{active} user $l\in\mathcal{L}_a$.
Let $\pmb{\mu}_a\triangleq (\mu_l)_{l\in\mathcal{L}_a}$ denote the decoding functions of all the $L_a$ active users.
For a coded caching scheme defined by an encoding function $\psi_a$, caching functions $\pmb{\phi}_a$ and decoding functions $\pmb{\mu}_a$, the probability of error is defined as 
$$P_e(\psi_a,\pmb{\phi}_a,\pmb{\mu}_a) \triangleq  \max_{\mathbf{d}_a \in \mathcal{N}^{L_a}} \max_{l \in \mathcal{L}_a} \Pr\left[\widehat{W}_{d_l}\neq W_{d_l}\right]< \epsilon.$$
A coded  caching scheme is called {\em admissible} if  for every $\epsilon>0$ and every large enough $\mathbf{V}$ as in~\cite{AliDec,wang2015coded,zhang2015coded,cheng2017optimal,NonuniformDemands,ji2015order,zhang2018coded,Sinong}, $P_e(\psi_a,\pmb{\phi}_a,\pmb{\mu}_a)\le \epsilon$.
Since the shared  link is  error free, such admissible schemes exist for sure.
Given  an admissible coded caching scheme $\mathfrak{F}$ and the requests $\mathbf{d}_a$ of all \wqb{the $L_a$} active users, let $R(\mathcal{L}_a,N,\mathbf{V},\mathbf{M},\mathfrak{F},\mathbf{d}_a)$ be the length (expressed in data units) of the multicast message $Y$, where $R(\mathcal{L}_a,N,\mathbf{V},\mathbf{M},\mathfrak{F},\mathbf{d}_a)$ represents the load of the shared link.
Let
\begin{equation}
	R_{\rm wst}(\mathcal{L}_a,N,\mathbf{V},\mathbf{M},\mathfrak{F}) \triangleq \max_{\mathbf{d}_a \in \mathcal N^{L_a}} R(\mathcal{L}_a,N,\mathbf{V},\mathbf{M},\mathfrak{F},  \mathbf{d}_a)\nonumber
\end{equation}
denote the worst-case load  of the shared link.
Let
\begin{equation}
	R_{\rm avg}(\mathcal{L}_a,N,\mathbf{V},\mathbf{M},\mathfrak{F}) \triangleq \mathbb{E}\big[R(\mathcal{L}_a,N,\mathbf{V},\mathbf{M},\mathfrak{F},\mathbf{d}_a)\big]\nonumber
\end{equation}
denote the average load  of the shared link, where the average is taken over random requests.\footnote{\wqb{Later, we shall use different notations for the worst-case load and average load to reflect the dependency on the specific schemes considered.}}
Let
\begin{equation}
	R_{\rm wst}^*(\mathcal{L}_a,N,\mathbf{V},\mathbf{M}) \triangleq \min_{\mathfrak{F}} R(\mathcal{L}_a,N,\mathbf{V},\mathbf{M},\mathfrak{F})
\end{equation}
and
\begin{equation}
	R_{\rm avg}^*(\mathcal{L}_a,N,\mathbf{V},\mathbf{M}) \triangleq \min_{\mathfrak{F}}R(\mathcal{L}_a,N,\mathbf{V},\mathbf{M},\mathfrak{F})
\end{equation}
denote the minimum achievable worst-case load and the minimum achievable average load of the shared link, where the minimum is taken over all admissible coded caching schemes.
In this paper, we adopt \wqb{the coded} delivery strategy (i.e., encoding function $\psi_a$ and decoding functions $\pmb{\mu}_a$) \wqb{in~\cite{AliDec}}, and wish to \wqb{optimize the uncoded placement strategy (i.e., caching functions $\pmb\phi_a$) for decentralized coded caching to} minimize the worst-case load and average  load of the shared link in the delivery phase, in the general scenario with  arbitrary file sizes $\mathbf{V}$ and arbitrary cache sizes $\mathbf{M}$.


\section{Parameter-based Decentralized Coded Caching}\label{Sec:Parameter-based}
In this section, we first present a class of decentralized coded caching schemes \wqb{for} the general scenario \wqb{with arbitrary file sizes and cache sizes}, which are specified by a general caching parameter.
Then, we derive  the expressions of the worst-case load and average load as functions of the caching parameter for the class of schemes.

In the placement phase \wqb{(involving all $L$ users)}, for all $t\in\mathcal{T}$, each user $l\in\mathcal{L}_t$ independently caches a subset of $q_{t,n}V_n$ data units of file $n$, chosen uniformly at random, where
\begin{align}
&0 \leq q_{t,n} \leq 1 , \quad \forall t\in\mathcal{T}, \ n \in \mathcal{N},\label{eqn:q_range}\\
&\sum_{n=1}^{N} q_{t,n}V_n\le \overline{M}_t, \quad  \forall t\in\mathcal{T}. \label{eqn:memory_constraint}
\end{align}
Note that \eqref{eqn:memory_constraint} represents the  cache memory constraint.
The caching parameter $\mathbf{q} \triangleq \left(q_{t,n}\right)_{t\in\mathcal{T},n\in\mathcal{N}}$ is a design parameter and will be optimized subject to the constraints in \eqref{eqn:q_range} and \eqref{eqn:memory_constraint} to minimize the worst-case load and average load in Section~\ref{Sec:Worst-case_Load_Minimization} and Section~\ref{Sec:Average_Load_Minimization}, respectively.
The general uncoded placement strategy parameterized by $\mathbf{q}$ extends those in~\cite{AliDec} ($T=1$ and $q_{1,n},n\in\mathcal{N}$ are the same), \cite{wang2015coded} ($q_{t,n},n\in\mathcal{N}$ are the same, for all $t\in\mathcal{T}$) and\cite{zhang2015coded,cheng2017optimal,NonuniformDemands,ji2015order,zhang2018coded,Sinong} ($T=1$).
As in~\cite{AliDec,wang2015coded,zhang2015coded,cheng2017optimal,NonuniformDemands,ji2015order,zhang2018coded,Sinong}, the random uncoded placement procedure can be operated in a decentralized manner in the sense that the \wqb{exact} number of active users in the delivery phase is \wqb{not} required and  the cache of each user is filled independently of the other users.

The coded delivery procedure \wqb{(involving the $L_a$ active users)} is the same as those in \cite{AliDec,Sinong,wang2015coded,zhang2015coded,cheng2017optimal} and is briefly presented here for completeness.
Let $W_{d_l,\mathcal{S}}$ denote the data units of file $d_l$ requested by \wqb{active} user $l\in\mathcal{L}_a$ that are  cached exclusively at the \wqb{active} users in $\mathcal{S}\subseteq \mathcal{L}_a$ (i.e., every data unit in $W_{d_l,\mathcal{S}}$ is present in the cache of every \wqb{active} user in $\mathcal{S}$ and is absent from the cache of every \wqb{active} user outside $\mathcal{S}$).
For any $\mathcal{S}\subseteq \mathcal{L}_a$ of cardinality $|\mathcal{S}|=s\in\{1,2,\dots,L_a\}$, the server transmits coded-multicast message $\oplus_{l\in\mathcal{S}}W_{d_l,\mathcal{S}\setminus \{l\}}$, where operator $\oplus$ denotes componentwise XOR.
All elements in the coded-multicast message are assumed to be zero-padded to the length of the longest element.
For all $s\in\{1,2,\dots,L_a\}$, we conduct the above delivery procedure.
The multicast message \wqb{for the $L_a$ active users} is simply the concatenation of the coded-multicast messages for all $s\in\{1,2,\dots,L_a\}$.
By the proof of Theorem 1 in~\cite{AliDec}, we can conclude that each \wqb{active} user \wqb{$l\in\mathcal{L}_a$} is able to decode the requested file based on the received multicast message and its cache content.

Now we formally summarize the placement and delivery procedures of the class of the decentralized coded caching schemes specified by the general caching  parameter $\mathbf{q}$ in Algorithm~\ref{alg:alg1}.
\begin{algorithm}[t]
\begin{small}
\caption{Parameter-based Decentralized Coded Caching}
\textbf{Input}: $\mathbf{q}=(q_{t,n})_{t\in\mathcal{T},n\in\mathcal{N}}$

\textbf{Placement procedure}
\begin{algorithmic}[1]
\FOR {$t\in\mathcal{T}$}
	\FOR {$l \in \mathcal{L}_t$, $n \in \mathcal N$}
  		\STATE user $l$ independently caches a subset of $q_{t,n}V_n$ data units of file $n$, chosen uniformly at random
  	\ENDFOR
\ENDFOR
\end{algorithmic}
\textbf{Delivery procedure~\cite{AliDec}}
\begin{algorithmic}[1]
\FOR {$s\in\{1,2,\dots,L_a\}$}
  \FOR {$\mathcal{S} \subseteq \mathcal{L}_a:|\mathcal{S}|=s$}
     \STATE Server sends $\oplus_{l \in \mathcal{S}} W_{d_l,\mathcal{S}\setminus \{l\}}$
  \ENDFOR
\ENDFOR
\end{algorithmic}\label{alg:alg1}
\end{small}
\end{algorithm}
By Algorithm~\ref{alg:alg1} and \cite{AliDec}, we can calculate the worst-case load for the \wqb{$L_a$} active users in $\mathcal{L}_a$ under given caching parameter $\mathbf{q}$ as follows:
\begin{align}
\max_{\mathbf{d}_a\in \mathcal{N}^{L_a}}\sum_{s=1}^{L_a}\sum_{\mathcal{S}\subseteq\mathcal{L}_a:|\mathcal{S}|=s }\max_{l\in \mathcal{S}}
\left(\prod_{a\in \mathcal{S}\backslash\{l\}}\prod_{t:a\in\mathcal{L}_{a,t}} q_{t,d_l}\right)
\left(\prod_{b\in\mathcal{L}_a\backslash(\mathcal{S}\backslash\{l\})}\prod_{t:b\in\mathcal{L}_{a,t}}(1-q_{t,d_l})\right)V_{d_l}, \label{eqn:worst_load_0}
\end{align}
where 
$\underset{l\in \mathcal{S}}\max
\left(\underset{a\in \mathcal{S}\backslash\{l\}}\prod\underset{t:a\in\mathcal{L}_{a,t}}\prod q_{t,d_l}\right)
\left(\underset{b\in\mathcal{L}_a\backslash(\mathcal{S}\backslash\{l\})}\prod\underset{t:b\in\mathcal{L}_{a,t}}\prod (1-q_{t,d_l})\right)V_{d_l}$ represents  the length of the coded-multicast message $\oplus_{l \in \mathcal{S}} W_{d_l,\mathcal{S}\setminus \{l\}}$.
\wqb{It is clear that the expression in \eqref{eqn:worst_load_0} is a function of $\mathbf{L}_a=(L_{a,t})_{t\in\mathcal{T}}$, library size $N$, file sizes $\mathbf{V}$ and caching parameter $\mathbf{q}$, and can be rewritten as}
\begin{align}
&R_{\rm wst}(\mathbf{L}_a,N,\mathbf{V},\mathbf{q})\triangleq\nonumber\\
 &\max_{\mathbf{n}\in \mathcal{N}^{L_a}}\sum_{s=1}^{L_a}\sum_{\substack{\mathcal{S}\subseteq\{1,\dots,L_a\}:\\|\mathcal{S}|=s }}\max_{j\in \mathcal{S}}
\left(\prod_{a\in \mathcal{S}\backslash\{j\}}\prod_{\substack{t:a\in\mathcal{I}_t(\mathbf{L}_a)}}q_{t,n_j}\right)
\left(\prod_{\substack{b\in\{1,\dots,L_a\}\\ \backslash  (\mathcal{S}\backslash\{j\})}}\prod_{t:b\in\mathcal{I}_t(\mathbf{L}_a)}(1-q_{t,n_j})\right)V_{n_j}, \label{eqn:worst_load_1}
\end{align}
\wqb{where $L_a=\sum_{t\in\mathcal{T}}L_{a,t}$, $I_t(\mathbf{L}_a)\triangleq \sum_{i=1}^t L_{a,i}$, $t\in\mathcal{T}$, $I_0(\mathbf{L}_a)\triangleq 0$, $\mathcal{I}_t(\mathbf{L}_a)\triangleq\big\{I_{t-1}(\mathbf{L}_a)+1,I_{t-1}(\mathbf{L}_a)+2,\dots, I_t(\mathbf{L}_a)\big\}$, $t\in\mathcal{T}$ and $\mathbf{n}\triangleq (n_1,n_2,\dots,n_{L_a})$.}
\wqb{Similarly, we can calculate the average load for the $L_a$ active users in $\mathcal{L}_a$ under given caching parameter $\mathbf{q}$ and rewrite it as:}
\begin{align}
&R_{\rm avg}(\mathbf{L}_a,N,\mathbf{V},\mathbf{q}) \triangleq \nonumber\\
&\sum_{\mathbf{n}\in \mathcal{N}^{L_a}}\left(\prod_{j=1}^{L_a} p_{n_j}\right)\sum_{s=1}^{L_a}\sum_{\substack{\mathcal{S}\subseteq\{1,\dots,L_a\}:\\|\mathcal{S}|=s }}\max_{j\in \mathcal{S}}
\left(\prod_{a\in \mathcal{S}\backslash\{j\}}\prod_{\substack{t:a\in\mathcal{I}_t(\mathbf{L}_a)}}q_{t,n_j}\right)
\left(\prod_{\substack{b\in\{1,\dots,L_a\}\\ \backslash(\mathcal{S}\backslash\{j\})}}\prod_{t:b\in\mathcal{I}_t(\mathbf{L}_a)}(1-q_{t,n_j})\right)V_{n_j}.\label{eqn:average_load_1}	
\end{align}


\wqb{We wish to optimize the caching parameter to minimize the worst-case load and average load of the shared link, both depending on $\mathbf{L}_a$.
As $\mathbf{L}_a$ is unknown in the placement phase in a decentralized setting, we consider a carefully chosen substitute $\mathbf{K}\triangleq(K_t)_{t\in\mathcal{T}}$ for $\mathbf{L}_a$ based on some prior information.
This parameter is referred to as the optimization parameter when formulating coded caching design optimization problems.
For example, optimization parameter $\mathbf{K}$ can be chosen as $K_t=\lceil \mathbb{E}[L_{a,t}]\rceil,t\in\mathcal{T}$, assuming that $\mathbf{L}_a$ is random and the distribution is known.}
\wqb{Please note that} the proposed optimization framework in this paper successfully extends \wqb{the parameter-based optimization framework} in our previous work~\cite{jin} which is for centralized coded caching in the scenario with the same file size and cache size.
\wqb{In addition, it is worth noting that the optimization-based frameworks for designing decentralized coded caching in\wqm{\cite{ji2015order,Sinong,cheng2017optimal}} rely on the exact number of active users in the delivery phase, and hence cannot be directly applied to a decentralized setup.
This problem can be appropriately handled by using a substitute for the exact number of active users in the delivery phase, as discussed above.}
\section{Worst-case Load Minimization}\label{Sec:Worst-case_Load_Minimization}
In this section, we first formulate a parameter-based coded caching design optimization problem  over the considered class of schemes to minimize the worst-case load by optimizing the caching parameter.
\wqb{Next}, we develop an iterative algorithm to obtain a stationary point of an equivalent problem.
\wqb{Finally}, we obtain an \wqb{\wqb{approximate} solution with low computational complexity and small worst-case load,} and characterize \wqb{its} performance loss.
To the best of our knowledge, this is the first work obtaining optimization-based decentralized coded caching design  to \wqb{reduce} the worst-case load in the general scenario with arbitrary file sizes and cache sizes.

\subsection{Problem Formulation}\label{Subsec:Problem_Formulation(Rwst)}

As the caching parameter  $\mathbf{q}$ fundamentally affects \wqb{the worst-case load, we would like to optimize $\mathbf{q}$ subject to the constraints in \eqref{eqn:q_range} and \eqref{eqn:memory_constraint}
so as to minimize $R_{\rm wst}(\mathbf{K},N,\mathbf{V},\mathbf{q})$.
}
\begin{Prob}[Caching Parameter Optimization \wqb{for Reducing} Worst-case Load]\label{P:Worst Case1}
\begin{align}
	R^*_{\rm wst}(\mathbf{K},N,\mathbf{V})\triangleq \min_{\mathbf{q}} \quad &  R_{\rm wst}(\mathbf{K},N,\mathbf{V},\mathbf{q})\nonumber \\
	\mathrm{s.t.} \quad  &\eqref{eqn:q_range}, \eqref{eqn:memory_constraint}, \nonumber
\end{align}
\end{Prob}
\wqb{where $R_{\rm wst}(\mathbf{K},N,\mathbf{V},\mathbf{q})$ is given in \eqref{eqn:worst_load_1}. }

As the objective function of Problem~\ref{P:Worst Case1} is nonconvex and nondifferentiable, Problem~\ref{P:Worst Case1} is a challenging problem.
A classical goal for solving a nonconvex problem is to obtain a stationary point.

\subsection{Solutions}\label{Subsec:Solutions(Rwst)}
\subsubsection{Stationary Point}\label{Subsubsec:Stationary Point(Rwst)}
First, we obtain an equivalent \wqb{nonconvex} problem of Problem~\ref{P:Worst Case1}.\footnote{Note that in this paper, $\succ$ and $\succeq$ represent componentwise inequalities.
In addition, note that for ease of analysis, we consider $\mathbf{q,x,w}\succ \mathbf{0},u>0$ instead of $\mathbf{q,x,w}\succeq \mathbf{0},u\ge0$, which does not change the optimal value or affect the numerical solution.}
\begin{Prob}[Equivalent Complementary GP of Problem~\ref{P:Worst Case1}]\label{P:Worst Case2}
\begin{align}
 &\min_{\mathbf{q},\mathbf{x,w}\succ \mathbf{0},u>0}  \quad u\nonumber\\
   &\ \quad\ \mathrm{s.t.} \ \quad\sum_{n=1}^N q_{t,n}V_n \overline{M}_t^{-1}\le 1,\quad \forall t\in\mathcal{T},\label{eqn:wstcgp_1} \\
 &\ \quad\ q_{t,n}\le 1,\quad \forall t\in\mathcal{T},\ n\in\mathcal{N},\label{eqn:wstcgp_2}\\
 &\ \quad\ \frac{1}{q_{t,n}+x_{t,n}}\le 1,\quad \forall t\in\mathcal{T},\ n\in\mathcal{N},\label{eqn:rational function}\\
 &\left(\prod_{a\in \mathcal{S}\backslash\{j\}}\prod_{\substack{t:a\in\mathcal{I}_t(\mathbf{K})}}q_{t,n_j}\right)
\left(\prod_{\substack{b\in\mathcal{K}\backslash  (\mathcal{S}\backslash\{j\})}}\prod_{t:b\in\mathcal{I}_t(\mathbf{K})}x_{t,n_j}\right)V_{n_j} w_{\mathbf{n},\mathcal{S}}^{-1}\le 1,\nonumber\\
 &\ \quad\ \forall \mathbf{n}\in\mathcal{N}^K,\ \mathcal{S}\subseteq\mathcal{K}:|\mathcal{S}|=s,\ j\in \mathcal{S},\label{eqn:wstcgp_4}\\
 &\ \quad\ \sum_{s=1}^K\sum_{\mathcal{S}\subseteq\mathcal{K}:|\mathcal{S}|=s }w_{\mathbf{n},\mathcal{S}} u^{-1}\le 1,\quad \forall\mathbf{n}\in\mathcal{N}^K\label{eqn:wstcgp_5},
\end{align}
\end{Prob}
where \wqb{$\mathbf{x}\triangleq \left(x_{t,n}\right)_{t\in\mathcal{T},n\in\mathcal{N}}$ , $\mathbf{w}\triangleq\left(w_{\mathbf{n},\mathcal{S}}\right)_{\mathbf{n}\in\mathcal{N}^K,\mathcal{S}\subseteq \mathcal{K}}$, $K\triangleq \sum_{t\in\mathcal{T}} K_t$ and $\mathcal{K}\triangleq \{1,2,\dots,K\}$.}
\vspace{1mm}
\begin{Lem}[Equivalence between Problem~\ref{P:Worst Case1} and Problem~\ref{P:Worst Case2}]\label{L:lem1} The optimal values of Problem~\ref{P:Worst Case1} and Problem~\ref{P:Worst Case2} are the same.
\end{Lem}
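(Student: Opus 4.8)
The plan is to show the two optimization problems have the same optimal value by exhibiting a correspondence between feasible points that preserves the objective. The key observation is that the auxiliary variables $\mathbf{x}$, $\mathbf{w}$, and $u$ in Problem~\ref{P:Worst Case2} are slack-type variables that, at optimality, are forced to take the values that recover the objective of Problem~\ref{P:Worst Case1}. Concretely, constraint \eqref{eqn:rational function}, namely $1/(q_{t,n}+x_{t,n}) \le 1$, is equivalent to $x_{t,n} \ge 1 - q_{t,n}$; constraint \eqref{eqn:wstcgp_4} forces $w_{\mathbf{n},\mathcal{S}}$ to be at least the length of the coded-multicast message $\bigoplus_{l\in\mathcal{S}} W_{d_l,\mathcal{S}\setminus\{l\}}$ with the substitution $1-q_{t,n_j} \mapsto x_{t,n_j}$; and constraint \eqref{eqn:wstcgp_5} forces $u \ge \sum_{s=1}^K \sum_{\mathcal{S}}w_{\mathbf{n},\mathcal{S}}$ for every request vector $\mathbf{n}$, i.e. $u$ dominates the worst-case load.

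First I would prove the inequality $R^*_{\rm wst}(\mathbf{K},N,\mathbf{V}) \ge (\text{optimal value of Problem~\ref{P:Worst Case2}})$: given any $\mathbf{q}$ feasible for Problem~\ref{P:Worst Case1} (with the harmless adjustment $\mathbf{q}\succ\mathbf{0}$ noted in the footnote), set $x_{t,n} = 1 - q_{t,n}$ (which is $>0$ after the adjustment, or one can take $x_{t,n} = \max\{1-q_{t,n},\delta\}$ and pass to a limit), let $w_{\mathbf{n},\mathcal{S}}$ equal the corresponding product times $V_{n_j}$ maximized over $j\in\mathcal{S}$, and let $u = R_{\rm wst}(\mathbf{K},N,\mathbf{V},\mathbf{q})$. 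One checks that \eqref{eqn:wstcgp_1}--\eqref{eqn:wstcgp_5} all hold — \eqref{eqn:wstcgp_1} is just \eqref{eqn:memory_constraint} rescaled, \eqref{eqn:wstcgp_2} is the upper bound in \eqref{eqn:q_range}, and \eqref{eqn:rational function}--\eqref{eqn:wstcgp_5} hold by construction — so this is a feasible point of Problem~\ref{P:Worst Case2} with objective $u = R_{\rm wst}(\mathbf{K},N,\mathbf{V},\mathbf{q})$. Taking the infimum over $\mathbf{q}$ gives the claimed inequality.

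Conversely, for the reverse inequality, take any feasible $(\mathbf{q},\mathbf{x},\mathbf{w},u)$ for Problem~\ref{P:Worst Case2}. From \eqref{eqn:rational function} we get $x_{t,n}\ge 1-q_{t,n}$, and since the products in \eqref{eqn:wstcgp_4} are monotone increasing in each $x_{t,n}$, the constraint \eqref{eqn:wstcgp_4} continues to hold with each $x_{t,n}$ replaced by $1-q_{t,n}$; hence $w_{\mathbf{n},\mathcal{S}} \ge \max_{j\in\mathcal{S}}(\cdots)(1-q_{t,n_j})(\cdots)V_{n_j}$. Summing via \eqref{eqn:wstcgp_5} gives $u \ge \sum_{s=1}^K\sum_{\mathcal{S}} w_{\mathbf{n},\mathcal{S}} \ge \sum_{s=1}^K\sum_{\mathcal{S}}\max_{j\in\mathcal{S}}(\cdots) = R_{\rm wst}(\mathbf{K},N,\mathbf{V},\mathbf{q})$ for every $\mathbf{n}\in\mathcal{N}^K$, where the last equality uses the definition \eqref{eqn:worst_load_1} of $R_{\rm wst}$ evaluated at $\mathbf{q}$ — and since this holds for all $\mathbf{n}$, in particular $u \ge R_{\rm wst}(\mathbf{K},N,\mathbf{V},\mathbf{q}) \ge R^*_{\rm wst}(\mathbf{K},N,\mathbf{V})$ as $\mathbf{q}$ is feasible for Problem~\ref{P:Worst Case1}. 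Taking the infimum over all feasible points of Problem~\ref{P:Worst Case2} finishes the argument.

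The routine bookkeeping — rewriting each monomial/posynomial inequality, verifying the index sets $\mathcal{I}_t(\mathbf{K})$ match, and checking the $\succ\mathbf{0}$ versus $\succeq\mathbf{0}$ subtlety mentioned in the footnote — is the bulk of the work but is straightforward. The one point requiring a little care is the monotonicity step in the reverse direction: I must confirm that replacing $x_{t,n}$ by the possibly smaller value $1-q_{t,n}$ only relaxes \eqref{eqn:wstcgp_4}, which is immediate since every factor $x_{t,n_j}$ appears with a nonnegative exponent and all quantities are positive. That, together with the fact that the "max over $j\in\mathcal{S}$" in \eqref{eqn:worst_load_1} is realized by the binding constraint among \eqref{eqn:wstcgp_4} for fixed $\mathbf{n},\mathcal{S}$, is the conceptual heart of the equivalence.
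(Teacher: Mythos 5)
Your proposal is correct and follows essentially the same route as the paper's proof: the paper introduces the auxiliary variables $\mathbf{x}$, $\mathbf{w}$, $u$ and replaces the terms $1-q_{t,n}$, the max terms, and the outer max with inequality-constrained substitutes, which is exactly the epigraph-type transformation you formalize. Your two-sided argument (feasible-point construction with $x_{t,n}=1-q_{t,n}$ in one direction, and the monotonicity of the posynomials in $x_{t,n}$ plus $x_{t,n}\ge 1-q_{t,n}$ in the other) simply makes explicit the justification the paper leaves implicit, including the $\succ\mathbf{0}$ boundary issue already dismissed in the paper's footnote.
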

\begin{proof}
\wqb{Please refer to Appendix A.	}
\end{proof}

Note that Problem~\ref{P:Worst Case2} minimizes a posynomial subject to 
upper bound inequality constraints on posynomials (i.e., \eqref{eqn:wstcgp_1}, \eqref{eqn:wstcgp_2}, \eqref{eqn:wstcgp_4} \wqb{and} \eqref{eqn:wstcgp_5}) and upper bound inequality constraints on the ratio between two posynomials (i.e., \eqref{eqn:rational function}), and hence is a Complementary GP (which is not a conventional GP).
A stationary point of a Complementary GP can be obtained by solving a sequence of approximate GPs~\cite{chiang2007power}.
Specifically, at iteration $i$, $\left(\mathbf{q}^{(i)},\mathbf{x}^{(i)},\mathbf{w}^{(i)},u^{(i)}\right)$ is updated by solving the following approximate GP of Problem~\ref{P:Worst Case2}, which is parameterized by $\left(\mathbf{q}^{(i-1)},\mathbf{x}^{(i-1)},\mathbf{w}^{(i-1)},u^{(i-1)}\right)$ obtained at iteration $i-1$.
%
\begin{Prob}[Approximate GP \wqb{of Problem~\ref{P:Worst Case2}} at Iteration $i$]\label{P:Worst Case4}
 \begin{align}
 &\left (\mathbf{q}^{(i)},\mathbf{x}^{(i)},\mathbf{w}^{(i)},u^{(i)}\right)\triangleq \mathop{\arg\min}_{\mathbf{q},\mathbf{x,w}\succ \mathbf{0},u>0}  \quad  u\nonumber\\
   &\mathrm{s.t.}\quad \eqref{eqn:wstcgp_1},\eqref{eqn:wstcgp_2},\eqref{eqn:wstcgp_4},\eqref{eqn:wstcgp_5}, \nonumber \\
 &\frac{1}{\left(q_{t,n}^{(i-1)}+x_{t,n}^{(i-1)}\right)\left(\frac{q_{t,n}}{q_{t,n}^{(i-1)}}\right)^{\alpha_{t,n}^{(i-1)}}\left(\frac{x_{t,n}}{x_{t,n}^{(i-1)}}\right)^{\beta_{t,n}^{(i-1)}}}\le 1,
 \quad\forall t\in\mathcal{T},n\in\mathcal{N},\label{eqn:wstcgp_8}
 \end{align}
\end{Prob}
where
\begin{equation}
	\alpha_{t,n}^{(i-1)}\triangleq\frac{q_{t,n}^{(i-1)}}{q_{t,n}^{(i-1)}+x_{t,n}^{(i-1)}},\ \beta_{t,n}^{(i-1)}\triangleq\frac{x_{t,n}^{(i-1)}}{q_{t,n}^{(i-1)}+x_{t,n}^{(i-1)}}. \nonumber
\end{equation}

Problem \ref{P:Worst Case4} is a standard form GP, which can be readily converted into a convex problem and solved efficiently (e.g., using interior point methods) \cite{bertsekas1999nonlinear}. 
The details are summarized in Algorithm~\ref{alg:alg2}.
By \cite{marks1978general}, we know that  $(\mathbf{q}^{(i)},\mathbf{x}^{(i)},\mathbf{w}^{(i)},u^{(i)})$ is provably convergent to a stationary point of Problem~\ref{P:Worst Case2}, as $i\to \infty$. 
\begin{algorithm}[t]
    \caption{Algorithm for Obtaining a Stationary Point of Problem~\ref{P:Worst Case2}}
        \begin{algorithmic}[1]
           \STATE \textbf{Initialization}: choose any feasible point $\left(\mathbf{q}^{(0)},\mathbf{x}^{(0)},\mathbf{w}^{(0)},u^{(0)}\right)$ and set $i=1$\\
           \STATE \textbf{repeat}
           \STATE \quad Compute $\left(\mathbf{q}^{(i)},\mathbf{x}^{(i)},\mathbf{w}^{(i)},u^{(i)}\right)$ by solving Problem \ref{P:Worst Case4}  using interior point methods
           \STATE\quad Set $i=i+1$
           \STATE \textbf{until} some convergence criterion is met
    \end{algorithmic}
    \label{alg:worst}\label{alg:alg2}
\end{algorithm}
\subsubsection{Low-complexity \wqb{Approximate} Solution}\label{Subsubsec:Low-complexity_Solution(Rwst)}
As the numbers of variables and constraints of Problem~\ref{P:Worst Case4} \wqb{are $O\left( N^K \right)$ and $O\left( N^K K \right)$, respectively,} and  Problem~\ref{P:Worst Case4} is solved at each iteration, the \wqb{computational} complexity of Algorithm~\ref{alg:alg2} is high for large $N$ or $K$.
In this part, we aim at obtaining a \wqb{low-complexity approximate solution of Problem~\ref{P:Worst Case1} with small worst-case load}, which is applicable especially for large $N$ or $K$.

First, we approximate Problem~\ref{P:Worst Case1} with a nondifferentiable objective function by a problem with a differentiable objective function.
The nondifferentiable max function can be bounded from above and below by the following differentiable functions:
\begin{equation}
\max\{x_1,x_2,\dots,x_n\}\le\frac{\ln(e^{cx_1}+\dots+e^{cx_n})}{c},\label{ineqn:max_appro_ub}
\end{equation}
\begin{equation}
\max\{x_1,x_2,\dots,x_n\}\ge \frac{\ln(e^{cx_1}+\dots+e^{cx_n})}{c}-\frac{\ln(n)}{c},\label{ineqn:max_appro_lb}
\end{equation}
where $c\ge 1$.
It is clear that the upper and lower bounds are asymptotically tight as $c\to \infty $, and the equality in \eqref{ineqn:max_appro_lb} holds when $x_1=\dots=x_n$.
In addition, the upper bound in \eqref{ineqn:max_appro_ub} is tighter when the deviation of $x_1,\dots,x_n$ is larger, and the lower bound in \eqref{ineqn:max_appro_lb} is tighter when the deviation is smaller.
By \eqref{ineqn:max_appro_ub}, we can obtain an upper bound of \wqb{$R_{\rm wst}(\mathbf{K},N,\mathbf{V},\mathbf{q})$}, denoted by 
\wqb{
\begin{align}
R^{\rm up}_{\rm wst}(\mathbf{K},N,\mathbf{V},\mathbf{q})
\triangleq &\frac{1}{c}\ln\Bigg (\sum_{\mathbf{n}\in \mathcal{N}^K}\exp\Bigg(c\sum_{s=1}^K\sum_{\mathcal{S}\subseteq\mathcal{K}:|\mathcal{S}|=s}\frac{1}{c}\ln\Bigg(\sum_{j\in\mathcal{S}}\exp\Bigg(c\Bigg(\prod_{a\in \mathcal{S}\backslash\{j\}}\prod_{\substack{t:a\in\mathcal{I}_t(\mathbf{K})}}q_{t,n_j}\Bigg)\nonumber\\
&\times\Bigg(\prod_{\substack{b\in\mathcal{K}\backslash  (\mathcal{S}\backslash\{j\})}}\prod_{t:b\in\mathcal{I}_t(\mathbf{K})}\left(1-q_{t,n_j}\right)\Bigg)V_{n_j}\Bigg)\Bigg) \Bigg)\Bigg ).\label{ineq:relaxmodel_ub}
\end{align}
}
Therefore, we can approximate Problem~\ref{P:Worst Case1} by minimizing \wqb{the upper bound in \eqref{ineq:relaxmodel_ub}}.


\begin{Prob}[Approximate Problem of Problem~\ref{P:Worst Case1}]\label{P:Worst Case max}
\begin{align}
 \wqb{R^{\rm ub*}_{\rm wst}(\mathbf{K},N,\mathbf{V})}\triangleq \min_{\mathbf{q}} \quad &  \wqb{R_{\rm wst}^{\rm ub}(\mathbf{K},N,\mathbf{V},\mathbf{q})}\nonumber \\
\mathrm{s.t.} \quad  &\eqref{eqn:q_range},\eqref{eqn:memory_constraint}.\nonumber
\end{align}
\end{Prob}
Let $\mathbf{q_{\rm wst}^\dag}$ denote an optimal solution of Problem~\ref{P:Worst Case max}.

Problem~\ref{P:Worst Case max} is a nonconvex problem with a differentiable objective function and linear constraint functions.
\wqb{A stationary point} of Problem~\ref{P:Worst Case max} can be obtained efficiently (e.g., using \wqb{gradient projection methods}~\cite{bertsekas1999nonlinear}).
As the numbers of variables and constraints \wqb{are both $O(NK)$}, \wqb{which are much smaller than those of Problem~\ref{P:Worst Case4}, the computational complexity for obtaining a \wqb{stationary point} of Problem~\ref{P:Worst Case max} is much lower than the computational complexity of Algorithm~\ref{alg:alg2}}. 
Note that we can run a \wqb{gradient projection} algorithm multiple times, each with a random initial point, and select the stationary point which achieves the minimum objective value for Problem~\ref{P:Worst Case max}.
Extensive numerical results show that this selected stationary point of Problem~\ref{P:Worst Case max} is usually an optimal solution of Problem~\ref{P:Worst Case max}.

Next, based on the upper and lower bounds in \eqref{ineqn:max_appro_ub} and \eqref{ineqn:max_appro_lb}, we characterize the \wqb{worst-case load increment} caused by the approximation, defined as \wqm{$L_{\rm wst}(\mathbf{K},N,\mathbf{V})\triangleq R_{\rm wst}(\mathbf{K},N,\mathbf{V},\mathbf{q}^\dag_{\rm wst})-R^*_{\rm wst}(\mathbf{K},N,\mathbf{V})$},
where $\mathbf{q_{\rm wst}^\dag}$ is an optimal solution of Problem~\ref{P:Worst Case max}.
\begin{Thm}[\wqb{Worst-case Load Increment}]\label{T:thm1}
	For all $c\ge 1$,
	\begin{align}
		L_{\rm wst}(\mathbf{K},N,\mathbf{V})&\le \frac{1}{c}\left(\sum_{i=1}^K \binom{K}{i}\ln i+K\ln N\right),\label{ineqn:T1_1}\\
		L_{\rm wst}(\mathbf{K},N,\mathbf{V})&=\frac{1}{c}O(2^K\ln K+K\ln N),\ {\rm{as}}\ K,N \to \infty.\label{ineqn:T1_2}
	\end{align}
\end{Thm}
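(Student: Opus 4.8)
The plan is to bound the quantity
$L_{\rm wst}(\mathbf{K},N,\mathbf{V}) = R_{\rm wst}(\mathbf{K},N,\mathbf{V},\mathbf{q}^\dag_{\rm wst}) - R^*_{\rm wst}(\mathbf{K},N,\mathbf{V})$
by inserting the approximate objective $R^{\rm ub}_{\rm wst}$ as an intermediary and using the two-sided sandwich on the $\max$ function from \eqref{ineqn:max_appro_ub} and \eqref{ineqn:max_appro_lb}. Let $\mathbf{q}^\star$ be an optimal solution of Problem~\ref{P:Worst Case1}, so $R^*_{\rm wst}(\mathbf{K},N,\mathbf{V}) = R_{\rm wst}(\mathbf{K},N,\mathbf{V},\mathbf{q}^\star)$. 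The key chain of inequalities I would establish is
\begin{align}
R_{\rm wst}(\mathbf{K},N,\mathbf{V},\mathbf{q}^\dag_{\rm wst})
&\le R^{\rm ub}_{\rm wst}(\mathbf{K},N,\mathbf{V},\mathbf{q}^\dag_{\rm wst})
\le R^{\rm ub}_{\rm wst}(\mathbf{K},N,\mathbf{V},\mathbf{q}^\star)
\le R_{\rm wst}(\mathbf{K},N,\mathbf{V},\mathbf{q}^\star) + \Delta,\nonumber
\end{align}
where the first step is \eqref{ineqn:max_appro_ub} applied at both levels of nesting (the $\max$ over $j\in\mathcal{S}$ and the $\max$ over $\mathbf{n}\in\mathcal{N}^K$), the second step is optimality of $\mathbf{q}^\dag_{\rm wst}$ for Problem~\ref{P:Worst Case max}, and the third step is \eqref{ineqn:max_appro_lb}, i.e. the gap term $\frac{\ln(\#\text{terms})}{c}$ incurred each time we replace a log-sum-exp by a genuine $\max$ from below. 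Subtracting $R^*_{\rm wst}$ yields $L_{\rm wst}(\mathbf{K},N,\mathbf{V}) \le \Delta$.

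The main work is to account for $\Delta$ precisely. The lower bound \eqref{ineqn:max_appro_lb} is applied in two places. First, in the inner log-sum-exp over $j\in\mathcal{S}$ for each fixed $\mathbf{n}$ and each $\mathcal{S}$ with $|\mathcal{S}|=s$: this contributes $\frac{\ln|\mathcal{S}|}{c} = \frac{\ln s}{c}$ per subset, and the number of subsets of size $s=i$ is $\binom{K}{i}$, so summing over $i$ from $1$ to $K$ gives $\frac{1}{c}\sum_{i=1}^K \binom{K}{i}\ln i$. Since the log-sum-exp is applied inside the sum $\sum_{s}\sum_{\mathcal{S}}$ which then sits inside the outer exponential, I need to track that the inner underestimate propagates additively through the outer log-sum-exp — this is where monotonicity of $x\mapsto\frac{1}{c}\ln\sum_j e^{cx_j}$ in each argument is used, so a uniform additive decrease $\frac{\ln s}{c}$ in the inner terms produces at most the same total additive decrease $\sum_s\sum_{\mathcal{S}}\frac{\ln s}{c}$ in the outer expression (independently of $\mathbf{n}$, hence the bound survives the outer $\max$/log-sum-exp). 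Second, the outer log-sum-exp over the $N^K$ choices of $\mathbf{n}$ contributes $\frac{\ln(N^K)}{c} = \frac{K\ln N}{c}$. Adding the two sources gives $\Delta = \frac{1}{c}\big(\sum_{i=1}^K\binom{K}{i}\ln i + K\ln N\big)$, which is exactly \eqref{ineqn:T1_1}.

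For the asymptotic statement \eqref{ineqn:T1_2}, I would estimate $\sum_{i=1}^K\binom{K}{i}\ln i$. Crudely, $\ln i \le \ln K$ for all $i\le K$, so $\sum_{i=1}^K\binom{K}{i}\ln i \le (\ln K)\sum_{i=1}^K\binom{K}{i} \le 2^K\ln K$; conversely the $i=\lceil K/2\rceil$ term alone is $\binom{K}{\lceil K/2\rceil}\ln\lceil K/2\rceil = \Theta(2^K/\sqrt{K})\cdot\Theta(\ln K)$, which is $\Theta(2^K\ln K/\sqrt K)$, confirming the dominant term is of order $2^K\ln K$ up to lower-order factors and that the $O(\cdot)$ bound $O(2^K\ln K)$ is the right statement. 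Combining with the $K\ln N$ term gives $L_{\rm wst}(\mathbf{K},N,\mathbf{V}) = \frac{1}{c}O(2^K\ln K + K\ln N)$ as $K,N\to\infty$.

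The step I expect to be the main obstacle is the careful bookkeeping in the nested log-sum-exp: one must verify that applying the lower bound \eqref{ineqn:max_appro_lb} to the inner $\max_{j\in\mathcal{S}}$ — which appears as a summand, not in isolation — cleanly yields an additive slack that does not get amplified by the outer exponential/logarithm, and that the resulting slack is independent of $\mathbf{n}$ so that it can be pulled outside the outer $\max$ over $\mathbf{n}\in\mathcal{N}^K$ before applying the lower bound a second time. Everything else (the two single applications of \eqref{ineqn:max_appro_ub}, optimality of $\mathbf{q}^\dag_{\rm wst}$, and the binomial estimates) is routine.
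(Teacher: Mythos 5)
Your proposal is correct and follows essentially the same route as the paper's proof in Appendix~B: the same three-step chain (upper bound at $\mathbf{q}^\dag_{\rm wst}$, optimality of $\mathbf{q}^\dag_{\rm wst}$ for the approximate problem, and the two-level lower bound at $\mathbf{q}^*_{\rm wst}$ yielding the slack $\frac{1}{c}\bigl(\sum_{i=1}^K\binom{K}{i}\ln i + K\ln N\bigr)$), including the observation that the inner slack $\frac{\ln s}{c}$ is independent of $\mathbf{n}$ and can be pulled out before the second application of \eqref{ineqn:max_appro_lb}. The only cosmetic difference is in the asymptotics, where the paper additionally derives a second bound via $\ln x\le x-1$ and takes a minimum, but your single estimate $\sum_i\binom{K}{i}\ln i\le 2^K\ln K$ already suffices for \eqref{ineqn:T1_2}.
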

\begin{proof}
Please refer to Appendix B.	
\end{proof}

From Theorem~\ref{T:thm1}, we know that the \wqb{worst-case load increment} can be made arbitrarily small by choosing a sufficiently large $c$, for \wqb{any} fixed $N$ and $K$.
However, in numerical experiments, the value of $c$ has to be kept modest in order not to exceed the maximum allowed value.
In addition, for all $c\ge 1$, the upper bound on the \wqb{worst-case load increment} \wqb{increases} with $N$ and $K$.
Whereas, it will be seen in Section~\ref{Sec:Numerical} that the \wqb{worst-case load of the \wqb{low-complexity} approximate solution} is still very promising for large $N$ or $K$.

\section{Average Load Minimization}\label{Sec:Average_Load_Minimization}
In this section, we first formulate a parameter-based coded caching design optimization problem over the considered class of schemes to minimize the average load by optimizing the caching parameter.
\wqb{Then}, we obtain a stationary point of an equivalent problem and a low-complexity approximate solution, \wqb{using methods similar to those in Section~\ref{Sec:Worst-case_Load_Minimization} for the worst-case load minimization.
We present the main results for completeness.}
To the best of our knowledge, this is the first \wqb{time that} optimization-based decentralized coded caching design \wqb{is obtained for reducing} the average load in the general scenario with arbitrary file sizes and cache sizes.

\subsection{Problem Formulation}\label{Subsec:Problem_Formulation(Ravg)}
We would like to optimize $\mathbf{q}$ subject to the constraints in \eqref{eqn:q_range} and \eqref{eqn:memory_constraint}
so as to minimize \wqb{$R_{\rm avg}(\mathbf{K},N,\mathbf{V},\mathbf{q})$}.
\begin{Prob}[Caching Parameter Optimization for \wqb{Reducing Average} Load]\label{P:Average Case1}
\begin{align}
	R^*_{\rm avg}(\mathbf{K},N,\mathbf{V})\triangleq \min_{\mathbf{q}} \quad &  R_{\rm avg}(\mathbf{K},N,\mathbf{V},\mathbf{q})\nonumber \\
	\mathrm{s.t.} \quad  &\eqref{eqn:q_range}, \eqref{eqn:memory_constraint}, \nonumber
\end{align}
\end{Prob}
\wqb{where $R_{\rm avg}(\mathbf{K},N,\mathbf{V},\mathbf{q})$ is given in \eqref{eqn:average_load_1}}. 

Problem~\ref{P:Average Case1} is a challenging problem \wqb{due to the nonconvexity and nondifferentiability of the objective function}.
\subsection{Solutions}\label{Subsec:Solutions(Ravg)}
\subsubsection{Stationary Point}\label{Subsubsec:Stationary_Point(Ravg)}
First, we obtain an equivalent \wqb{nonconvex} problem of Problem~\ref{P:Average Case1} \wqb{using a method similar to the one in} Section~\ref{Subsubsec:Stationary Point(Rwst)}.
%
\begin{Prob}[Equivalent Complementary GP of Problem~\ref{P:Average Case1}]\label{P:Average Case2}
 \begin{align}
 &\min_{\mathbf{q,x,w}\succ \mathbf{0}}  \quad \sum_{\mathbf{n}\in \mathcal{N}^K}\left(\prod_{j=1}^K p_{n_j}\right)\sum_{j=1}^K\sum_{\mathcal{S}\subseteq\mathcal{K}:|\mathcal{S}|=s } w_{\mathbf{n},\mathcal{S}}\nonumber\\
   &\quad \mathrm{s.t.} \quad\quad\  \eqref{eqn:wstcgp_1}, \eqref{eqn:wstcgp_2}, \eqref{eqn:rational function}, \eqref{eqn:wstcgp_4}.
   \end{align}
\end{Prob}
\begin{Lem}[Equivalence between Problem~\ref{P:Average Case1} and Problem~\ref{P:Average Case2}]\label{L:lem2} The optimal values of Problem~\ref{P:Average Case1} and Problem~\ref{P:Average Case2} are the same.
\end{Lem}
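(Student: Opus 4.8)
The plan is to mirror the argument used for Lemma~\ref{L:lem1}, adapting it to the average-load objective in \eqref{eqn:average_load_1}. Recall that Problem~\ref{P:Average Case1} minimizes $R_{\rm avg}(\mathbf{K},N,\mathbf{V},\mathbf{q})$, which has the form $\sum_{\mathbf{n}} \big(\prod_j p_{n_j}\big) \sum_{s} \sum_{\mathcal{S}} \max_{j\in\mathcal{S}}(\cdot)$, with each inner term being a monomial in $\mathbf{q}$ and $(1-\mathbf{q})$. Just as in Appendix~A, the first step is to substitute $x_{t,n} = 1 - q_{t,n}$ and impose the coupling $q_{t,n} + x_{t,n} \ge 1$ (equivalently $\frac{1}{q_{t,n}+x_{t,n}} \le 1$, constraint \eqref{eqn:rational function}); at optimality this holds with equality since decreasing $x_{t,n}$ can only decrease the objective, so the substitution is without loss of optimality. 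The term $1 - q_{t,n}$ is a posynomial but $(1-q_{t,n})$ appearing inside a product being \emph{minimized} is fine, whereas here the introduction of $x_{t,n}$ with the $\ge 1$ coupling is what converts the problem into Complementary GP form. The constraints \eqref{eqn:q_range} and \eqref{eqn:memory_constraint} translate directly into \eqref{eqn:wstcgp_1} and \eqref{eqn:wstcgp_2}.

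Next I would linearize the $\max_{j\in\mathcal{S}}$ by introducing the epigraph variables $w_{\mathbf{n},\mathcal{S}}$: for each $\mathbf{n}$, $\mathcal{S}$, and $j\in\mathcal{S}$, impose that the monomial $\big(\prod_{a\in\mathcal{S}\setminus\{j\}}\prod_{t:a\in\mathcal{I}_t(\mathbf{K})} q_{t,n_j}\big)\big(\prod_{b\in\mathcal{K}\setminus(\mathcal{S}\setminus\{j\})}\prod_{t:b\in\mathcal{I}_t(\mathbf{K})} x_{t,n_j}\big) V_{n_j}$ is at most $w_{\mathbf{n},\mathcal{S}}$, i.e.\ that this monomial divided by $w_{\mathbf{n},\mathcal{S}}$ is $\le 1$ — precisely constraint \eqref{eqn:wstcgp_4}. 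The objective then becomes the posynomial $\sum_{\mathbf{n}}\big(\prod_j p_{n_j}\big)\sum_s \sum_{\mathcal{S}} w_{\mathbf{n},\mathcal{S}}$ as written in Problem~\ref{P:Average Case2}. (Note there is no need for the extra variable $u$ and constraint \eqref{eqn:wstcgp_5} that appeared in the worst-case formulation, because here we are summing the $w_{\mathbf{n},\mathcal{S}}$ rather than taking another max over $\mathbf{n}$.) The two directions of the equivalence are then routine: given any feasible $\mathbf{q}$ for Problem~\ref{P:Average Case1}, setting $x_{t,n}=1-q_{t,n}$ and $w_{\mathbf{n},\mathcal{S}}$ equal to the corresponding max gives a feasible point of Problem~\ref{P:Average Case2} with the same objective value; conversely, from any feasible point of Problem~\ref{P:Average Case2}, the $\mathbf{q}$-component is feasible for Problem~\ref{P:Average Case1} and, since \eqref{eqn:wstcgp_4} forces $w_{\mathbf{n},\mathcal{S}} \ge \max_{j\in\mathcal{S}}(\cdot)$ while $x_{t,n}\ge 1-q_{t,n}$, the Problem~\ref{P:Average Case2} objective is an upper bound on $R_{\rm avg}(\mathbf{K},N,\mathbf{V},\mathbf{q})$.

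Combining the two directions shows the optimal values coincide. The only mild subtlety — and the place I would be most careful — is the handling of the $x_{t,n}$-substitution: one must verify that replacing the equality $x_{t,n} = 1 - q_{t,n}$ by the inequality constraint \eqref{eqn:rational function} does not enlarge the optimal value, which relies on monotonicity of each monomial term in $x_{t,n}$ (all exponents on $x_{t,n}$ are nonnegative) so that shrinking $x_{t,n}$ down to $1-q_{t,n}$ never hurts feasibility of \eqref{eqn:wstcgp_4} and never increases the objective. Since this is exactly the mechanism already established in the proof of Lemma~\ref{L:lem1}, the full argument can be stated briefly by reference. Hence the details are deferred to Appendix~C, following the template of Appendix~A.
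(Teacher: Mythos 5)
Your proposal is correct and follows essentially the same route as the paper, whose own proof of Lemma~\ref{L:lem2} is simply the Appendix~A construction for Lemma~\ref{L:lem1} adapted to the average-load objective: introduce $x_{t,n}$ with constraint \eqref{eqn:rational function}, introduce the epigraph variables $w_{\mathbf{n},\mathcal{S}}$ with \eqref{eqn:wstcgp_4}, and drop the outer variable $u$ and constraint \eqref{eqn:wstcgp_5} since the average load sums over $\mathbf{n}$ rather than maximizing. Your monotonicity argument for why the inequality relaxations are tight at optimality is exactly the mechanism implicit in the paper's proof, so no gap remains.
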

\begin{proof}
\wqb{The proof is similar to that of Lemma~\ref{L:lem1} and is omitted due to page limitation.}	
\end{proof}

\wqb{It is clear that Problem~\ref{P:Average Case2}} is a Complementary GP.
\wqb{We can obtain} a stationary point by solving a sequence of approximate GPs~\cite{chiang2007power}.
Specifically, at iteration $i$, $\left(\mathbf{q}^{(i)},\mathbf{x}^{(i)},\mathbf{w}^{(i)}\right)$ is updated by solving the following approximate GP of Problem~\ref{P:Average Case2}, which is parameterized by $\left(\mathbf{q}^{(i-1)},\mathbf{x}^{(i-1)},\mathbf{w}^{(i-1)}\right)$ obtained at iteration $i-1$.

\begin{Prob}[Approximate GP \wqb{of Problem~\ref{P:Average Case2}} at Iteration $i$]\label{P:Average Case3}
 \begin{align}
 \left(\mathbf{q}^{(i)},\mathbf{x}^{(i)},\mathbf{w}^{(i)}\right)
 \triangleq\ &\mathop{\arg\min}_{\mathbf{q,x,w}\succ \mathbf{0}}\ \sum_{\mathbf{n}\in \mathcal{N}^K}\left(\prod_{j=1}^K p_{n_j}\right)\sum_{j=1}^K\sum_{\mathcal{S}\subseteq\mathcal{K}:|\mathcal{S}|=s } w_{\mathbf{n},\mathcal{S}}\nonumber\\
   &\quad\ \ \ \mathrm{s.t.}\quad \eqref{eqn:wstcgp_1},\eqref{eqn:wstcgp_2},\eqref{eqn:wstcgp_4},\eqref{eqn:wstcgp_8}. \nonumber
 \end{align}
\end{Prob}

\wqb{The details are summarized in} Algorithm~\ref{alg:alg3}.
By \cite{marks1978general}, we know that  $(\mathbf{q}^{(i)},\mathbf{x}^{(i)},\mathbf{w}^{(i)})$ is provably convergent to a stationary point of Problem~\ref{P:Average Case2}, as $i\to \infty$. 
\begin{algorithm}[t]
    \caption{Algorithm for Obtaining a Stationary Point of Problem~\ref{P:Average Case2}}
        \begin{algorithmic}[1]
           \STATE \textbf{Initialization}: choose any feasible point $\left(\mathbf{q}^{(0)},\mathbf{x}^{(0)},\mathbf{w}^{(0)}\right)$ and set $i=1$\\
           \STATE \textbf{repeat}
           \STATE \quad Compute $\left(\mathbf{q}^{(i)},\mathbf{x}^{(i)},\mathbf{w}^{(i)}\right)$ by solving Problem \ref{P:Average Case3}  using 
           interior point methods
           \STATE\quad Set $i=i+1$
           \STATE \textbf{until} some convergence criterion is met
    \end{algorithmic}
    \label{alg:worst}\label{alg:alg3}
\end{algorithm}

\subsubsection{Low-complexity \wqb{Approximate} Solution}\label{Subsubsec:Low-complexity_Solution(Ravg)}
\wqb{As the numbers of variables and constraints of Problem~\ref{P:Average Case3} are $O\left( N^K \right)$ and $O\left( N^K K \right)$, respectively, and Problem~\ref{P:Average Case3} is solved at each iteration, the computational complexity of Algorithm~\ref{alg:alg3} is high for large $N$ or $K$.}
In this part, \wqb{we obtain a low-complexity approximate solution} of Problem~\ref{P:Average Case1}, which is applicable especially for large $N$ or $K$.

\wqb{First}, we approximate Problem~\ref{P:Average Case1} \wqb{which has} a nondifferentiable objective function $R_{\rm avg}(\mathbf{K},N,\mathbf{V},\mathbf{q})$ by a problem \wqb{whose objective function is a differentiable upper bound of $R_{\rm avg}(\mathbf{K},N,\mathbf{V},\mathbf{q})$)}, given by 
\begin{align}
	R_{\rm avg}^{\rm ub}(\mathbf{K},N,\mathbf{V},\mathbf{q})
	\triangleq &\sum_{\mathbf{n}\in \mathcal{N}^K}\left(\prod_{j=1}^K p_{n_j}\right)\sum_{s=1}^K\sum_{\mathcal{S}\subseteq\mathcal{K}:|\mathcal{S}|=s}\frac{1}{c}\ln\Bigg(\sum_{j\in\mathcal{S}}\exp\Bigg(c\Bigg(\prod_{a\in \mathcal{S}\backslash\{j\}}\prod_{\substack{t:a\in\mathcal{I}_t(\mathbf{K})}}q_{t,n_j}\Bigg)\nonumber\\
&\times\Bigg(\prod_{\substack{b\in\mathcal{K}\backslash  (\mathcal{S}\backslash\{j\})}}\prod_{t:b\in\mathcal{I}_t(\mathbf{K})}\left(1-q_{t,n_j}\right)\Bigg)V_{n_j}\Bigg)\Bigg).\label{ineq:relaxmodel_avg_ub}	
\end{align}

\begin{Prob}[Approximate Problem of Problem~\ref{P:Average Case1}]\label{P:Average Case max}
\begin{align}
 R^{\rm ub*}_{\rm avg}(\mathbf{K},N,\mathbf{M})\triangleq \min_{\mathbf{q}} \quad &  R_{\rm avg}^{\rm ub}(\mathbf{K},N,\mathbf{V},\mathbf{q})\nonumber \\
\mathrm{s.t.} \quad  &\eqref{eqn:q_range},\eqref{eqn:memory_constraint}.\nonumber
\end{align}
\end{Prob}
Let $\mathbf{q_{\rm avg}^\dag}$ denote an optimal solution of Problem~\ref{P:Average Case max}.

\wqb{The numbers of variables and constraints of Problem~\ref{P:Average Case max} are both $O\left( NK \right)$, which is much smaller than those of Problem~\ref{P:Average Case3}.
Thus, the computational complexity for obtaining a \wqb{stationary point} of Problem~\ref{P:Average Case max} (e.g., using \wqb{gradient projection methods}~\cite{bertsekas1999nonlinear}) is much lower than that of Algorithm~\ref{alg:alg3}.}

Next, we characterize the average load \wqb{increment} caused by the approximation, defined as $L_{\rm avg}(\mathbf{K},N,\mathbf{V})\triangleq R_{\rm avg}(\mathbf{K},N,\mathbf{V},\mathbf{q_{\rm avg}^\dag})-R^*_{\rm avg}(\mathbf{K},N,\mathbf{V})$,
where $\mathbf{q_{\rm avg}^\dag}$ is an optimal solution of Problem~\ref{P:Average Case max}.
\begin{Thm}[Average Load \wqb{Increment}]\label{T:thm2}
	For all $c\ge 1$,
	\begin{align}
		L_{\rm avg}(K,N,\mathbf{V},\mathbf{M})&\le \frac{1}{c}\sum_{i=1}^K \binom{K}{i}\ln i,\label{ineqn:T2_1}\\
		L_{\rm avg}(K,N,\mathbf{V},\mathbf{M})&=\frac{1}{c}O(2^K\ln K),\ {\rm{as}}\ K,N \to \infty.\label{ineqn:T2_2}
	\end{align}
\end{Thm}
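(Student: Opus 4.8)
The plan is to mirror the argument behind Theorem~\ref{T:thm1}, exploiting that $R_{\rm avg}^{\rm ub}(\mathbf{K},N,\mathbf{V},\mathbf{q})$ in \eqref{ineq:relaxmodel_avg_ub} is obtained from $R_{\rm avg}(\mathbf{K},N,\mathbf{V},\mathbf{q})$ in \eqref{eqn:average_load_1} by replacing each inner $\max_{j\in\mathcal{S}}$ with its log-sum-exp surrogate, while the outer aggregation over $\mathbf{n}\in\mathcal{N}^K$ (a convex combination with weights $\prod_{j}p_{n_j}$ summing to one) is left unchanged.

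First I would fix an optimal solution $\mathbf{q}^\star$ of Problem~\ref{P:Average Case1}, so that $R_{\rm avg}^*(\mathbf{K},N,\mathbf{V})=R_{\rm avg}(\mathbf{K},N,\mathbf{V},\mathbf{q}^\star)$, and recall that $\mathbf{q}_{\rm avg}^\dag$ minimizes $R_{\rm avg}^{\rm ub}(\mathbf{K},N,\mathbf{V},\cdot)$ over the feasible set of Problem~\ref{P:Average Case max}. Since the quantities being maxed over $j\in\mathcal{S}$ are nonnegative, applying the upper bound \eqref{ineqn:max_appro_ub} termwise and then taking the nonnegatively weighted sum shows $R_{\rm avg}(\mathbf{K},N,\mathbf{V},\mathbf{q})\le R_{\rm avg}^{\rm ub}(\mathbf{K},N,\mathbf{V},\mathbf{q})$ for every feasible $\mathbf{q}$. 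In particular $R_{\rm avg}(\mathbf{K},N,\mathbf{V},\mathbf{q}_{\rm avg}^\dag)\le R_{\rm avg}^{\rm ub}(\mathbf{K},N,\mathbf{V},\mathbf{q}_{\rm avg}^\dag)\le R_{\rm avg}^{\rm ub}(\mathbf{K},N,\mathbf{V},\mathbf{q}^\star)$, the last step by optimality of $\mathbf{q}_{\rm avg}^\dag$ for Problem~\ref{P:Average Case max}, whence $L_{\rm avg}(\mathbf{K},N,\mathbf{V})\le R_{\rm avg}^{\rm ub}(\mathbf{K},N,\mathbf{V},\mathbf{q}^\star)-R_{\rm avg}(\mathbf{K},N,\mathbf{V},\mathbf{q}^\star)$.

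Next I would control this gap term by term. For each $\mathbf{n}\in\mathcal{N}^K$ and each $\mathcal{S}$ with $|\mathcal{S}|=s$, the lower bound \eqref{ineqn:max_appro_lb} applied to the $s$ numbers indexed by $j\in\mathcal{S}$ shows the log-sum-exp surrogate exceeds the corresponding $\max_{j\in\mathcal{S}}$ by at most $\frac{\ln s}{c}$ (in particular by $0$ when $s=1$). Summing over the $\binom{K}{s}$ subsets of size $s$, over $s=1,\dots,K$, and over $\mathbf{n}$ with weights $\prod_{j}p_{n_j}$ --- which sum to one, so the $\mathbf{n}$-sum contributes the factor one and $N$ drops out --- gives $R_{\rm avg}^{\rm ub}(\mathbf{K},N,\mathbf{V},\mathbf{q}^\star)-R_{\rm avg}(\mathbf{K},N,\mathbf{V},\mathbf{q}^\star)\le\frac{1}{c}\sum_{i=1}^K\binom{K}{i}\ln i$, i.e.\ \eqref{ineqn:T2_1}. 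For \eqref{ineqn:T2_2}, bounding $\ln i\le\ln K$ for $1\le i\le K$ yields $\sum_{i=1}^K\binom{K}{i}\ln i\le(2^K-1)\ln K$, so $L_{\rm avg}=\frac{1}{c}O(2^K\ln K)$ as $K,N\to\infty$.

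The one conceptual point --- and the only difference from Theorem~\ref{T:thm1} --- is that the average load has no outer maximum over $\mathbf{n}\in\mathcal{N}^K$, only an expectation, so no approximation error is incurred in that aggregation and the $K\ln N$ term appearing in \eqref{ineqn:T1_1} is absent. Beyond careful bookkeeping of the indices $\mathbf{n},\mathcal{S},s$ there is no real obstacle; one only needs to note that $R_{\rm avg}^{\rm ub}$ is finite so the subtraction above is well defined, which is immediate from the boundedness of $\mathbf{q}$ and $\mathbf{V}$.
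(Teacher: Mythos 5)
Your proposal is correct and follows essentially the same route as the paper: bound $R_{\rm avg}$ above by $R_{\rm avg}^{\rm ub}$ via \eqref{ineqn:max_appro_ub}, use optimality of $\mathbf{q}_{\rm avg}^\dag$ for the approximate problem to reduce everything to the gap $R_{\rm avg}^{\rm ub}-R_{\rm avg}$ at the true optimizer, and bound that gap by $\frac{1}{c}\sum_{i=1}^K\binom{K}{i}\ln i$ via \eqref{ineqn:max_appro_lb}, noting correctly that the expectation over $\mathbf{n}$ (weights summing to one) removes the $K\ln N$ term present in the worst-case result. The only cosmetic difference is that for \eqref{ineqn:T2_2} the paper takes the minimum of two upper bounds (also using $\ln x\le x-1$) whereas you use only $\ln i\le\ln K$, which already suffices for the stated $O(2^K\ln K)$ scaling.
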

\begin{proof}
Please refer to Appendix C.	
\end{proof}

From Theorem~\ref{T:thm2}, we know that the average load \wqb{increment} can be made arbitrarily small by choosing a sufficiently large $c$, for \wqb{any} fixed $N$ and $K$.
It will be seen in Section~\ref{Sec:Numerical} that the \wqb{average load of the low-complexity approximate solution} is still very promising for large $N$ or $K$.

\section{Converse Bound}\label{Sec:Converse}
In this section, we present information-theoretic converse bounds on the minimum worst-case load \wqb{and average load (under an arbitrary file popularity) in the general scenario, respectively.}
Both converse bounds belong to the second class of converse bounds stated in Section~\ref{Sec:Introduction} (applicable to \wqb{both} uncoded placement and coded placement).
\wqb{To our knowledge, this is the first work providing information-theoretic converse bounds on the minimum worst-case load and average load in the general scenario.}

\wqb{Given $\mathcal{S}\subseteq\mathcal{L},|\mathcal{S}|=s$, let $M_{[\mathcal{S},1]}\le M_{[\mathcal{S},2]}\le \dots \le M_{[\mathcal{S},s-1]}\le M_{[\mathcal{S},s]}$ be $M_l,l\in\mathcal{S}$ arranged in increasing order, so that $M_{[\mathcal{S},i]}$ is the $i$-th smallest.}
Extending the proof \wqb{for} the converse bound on the minimum worst-case load \wqb{in the scenario with the same file size and cache size} \wqm{in}~\cite{Chien_Yi_Wang2018improved}, we \wqb{obtain a converse bound on \wqb{that} in the scenario with arbitrary file sizes and cache sizes.}
\begin{Lem}[Converse Bound on Worst-case Load]\label{L:converse_bound_wst}
	\ \ \ The\ \ \ minimum\ \ \ worst-case\ \ \ load\ \ \ $R_{\rm wst}^*(\mathcal{L}_a,N,\mathbf{V},\mathbf{M})$ for the shared, error-free link caching network with \wqb{active users in $\mathcal{L}_a$}, library size $N\in\mathbb{N}$, file sizes $\mathbf{V}$ and cache sizes $\mathbf{M}$ satisfies
	\begin{align}
		 R_{\rm wst}^*(\mathcal{L}_a,N,\mathbf{V},\mathbf{M})\ge\max_{m\in\{1,\dots,\min\{N,L_a\}\}}\left\{\frac{m}{N}\sum_{i=1}^N V_{i}-\min\left\{\sum_{l=1}^{m} \frac{\sum_{i=1}^l M_{[\mathcal{L}_a,i]}}{N-l+1},\frac{m}{N}\sum_{i=1}^{m} M_{[\mathcal{L}_a,i]}\right\}\right\}.\label{ineqn:converse_bound_wst}
 	\end{align}
\end{Lem}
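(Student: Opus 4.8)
The plan is to adapt the cut-set–style argument of \cite{Chien_Yi_Wang2018improved}, generalizing it to accommodate unequal file sizes and unequal cache sizes. Fix $m\in\{1,\dots,\min\{N,L_a\}\}$ and select a subset of $m$ active users; to make the bound as tight as possible, pick the $m$ users in $\mathcal{L}_a$ with the \emph{smallest} cache sizes, so their sorted cache sizes are $M_{[\mathcal{L}_a,1]}\le\cdots\le M_{[\mathcal{L}_a,m]}$. The key information-theoretic inequality is the standard one: by considering $\lfloor N/m\rfloor$ successive request vectors in which these $m$ users jointly request $m\lfloor N/m\rfloor$ distinct files, the corresponding multicast messages together with the $m$ caches must allow reconstruction of all those files. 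This yields an inequality of the form $(\text{number of rounds})\cdot R_{\rm wst}^* + (\text{sum of the }m\text{ cache sizes}) \ge (\text{total size of the requested files})$. Because file sizes differ, the ``total size of requested files'' must be handled carefully: rather than $m\lfloor N/m\rfloor$ times a common file size, one groups the $N$ files and bounds the sum by an averaging argument, which is where the factor $\frac{m}{N}\sum_{i=1}^N V_i$ (the average total library size scaled by $m$) enters, after letting the file sizes grow as permitted in the admissibility definition so that the $\lfloor\cdot\rfloor$ effects vanish.

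Next I would derive the subtracted term, i.e. the $\min\{\cdot,\cdot\}$ in \eqref{ineqn:converse_bound_wst}. The two expressions inside the $\min$ come from two different ways of accounting for the caches. The second term, $\frac{m}{N}\sum_{i=1}^m M_{[\mathcal{L}_a,i]}$, is the ``naive'' cut-set accounting: each of the $m$ chosen caches is counted once per round and there are effectively $m/N$ rounds worth of weight after normalization. The first term, $\sum_{l=1}^m \frac{\sum_{i=1}^l M_{[\mathcal{L}_a,i]}}{N-l+1}$, is the improvement of \cite{Chien_Yi_Wang2018improved}: one peels off users one at a time, and when only $l$ caches remain in play there are $N-l+1$ files still to be recovered, giving the weight $1/(N-l+1)$ on the partial sum $\sum_{i=1}^l M_{[\mathcal{L}_a,i]}$ of the $l$ smallest caches. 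Taking the minimum of the two is legitimate because each is separately a valid upper bound on how much the caches can reduce the load, so the converse holds with whichever is smaller. Finally, maximizing over $m$ preserves validity since each choice of $m$ gives a genuine lower bound.

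The main obstacle I anticipate is the bookkeeping needed to push the \cite{Chien_Yi_Wang2018improved} peeling argument through with \emph{both} heterogeneities simultaneously: the original proof exploits symmetry among equal-size files and among equal-size caches to cleanly re-index the rounds, whereas here one must (i) justify sorting the caches and always choosing the $l$ smallest at each peeling stage to get the right partial sums $M_{[\mathcal{L}_a,i]}$, and (ii) replace per-file counting by the averaging step that produces $\frac{1}{N}\sum_{i=1}^N V_i$ and $\frac{1}{N}\sum_{i=1}^m M_{[\mathcal{L}_a,i]}$ rather than literal per-file or per-round quantities. Concretely, one applies the entropy-based chain of inequalities (using that $H(Z_l)\le M_l$, that the messages plus caches determine the requested files, and Han-type / submodularity inequalities over the choice of which users to include) and then averages over all $\binom{L_a}{m}$ — or, more economically, over cyclic shifts of — the file-to-user assignments, invoking the large-file-size regime to discard integer-rounding terms. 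Once the inequality is assembled, rearranging to isolate $R_{\rm wst}^*$ and then taking $\max_m$ gives \eqref{ineqn:converse_bound_wst}; I would defer the detailed entropy manipulations to the appendix.
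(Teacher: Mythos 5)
Your overall route coincides with the paper's: both rest on the successive-decoding converse of \cite{Chien_Yi_Wang2018improved} applied to the $m$ active users with the smallest caches (ordered so that the smallest cache enters the chain first), with unequal file sizes handled by averaging the resulting entropy inequality over all length-$m$ demand vectors with distinct entries; that averaging is exactly what produces $\frac{m}{N}\sum_{i=1}^N V_i$ and the weights $\frac{1}{N-l+1}$ on the partial sums $\sum_{i=1}^l M_{[\mathcal{L}_a,i]}$. The entropy manipulations you defer are precisely what the paper imports from that reference (its Lemma 1 and inequalities (88), (91)), combined with $H(Z_i)\le M_{[\mathcal{L}_a,i]}$.

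There is, however, one concrete gap: your derivation of the second branch of the $\min$. The multi-round cut-set argument you describe, with $\lfloor N/m\rfloor$ rounds of distinct requests, yields
\[
R_{\rm wst}^*(\mathcal{L}_a,N,\mathbf{V},\mathbf{M})\ \ge\ \frac{m}{N}\sum_{i=1}^N V_i-\frac{1}{\lfloor N/m\rfloor}\sum_{i=1}^m M_{[\mathcal{L}_a,i]},
\]
and since $\frac{1}{\lfloor N/m\rfloor}\ge\frac{m}{N}$ with strict inequality whenever $m\nmid N$, this is strictly weaker than the claimed branch $\frac{m}{N}\sum_{i=1}^m M_{[\mathcal{L}_a,i]}$. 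Letting the file sizes grow does not repair this: $\lfloor N/m\rfloor$ depends only on $N$ and $m$, not on $\mathbf{V}$, so the rounding loss does not vanish in the admissibility limit. The paper obtains the exact $m/N$ factor without any cut-set rounds, by bounding the \emph{sum} of the mutual-information penalties inside the same successive-decoding chain, namely $\sum_{l=1}^m\alpha_l\le\frac{m}{N}\,I(W_1,\dots,W_N;Z_1,\dots,Z_m)\le\frac{m}{N}\sum_{i=1}^m M_{[\mathcal{L}_a,i]}$ (inequality (91) of the cited reference together with $H(Z_i)\le M_{[\mathcal{L}_a,i]}$). Replacing your cut-set step for the second branch by this bound closes the gap; the rest of your outline then matches the paper's argument.
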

\begin{proof}
Please refer to Appendix D.	
\end{proof}

\wqb{It is clear} that the converse bound on the minimum worst-case load for arbitrary file sizes and  cache sizes reduces to the one in~\cite{Chien_Yi_Wang2018improved} when $V_n,n\in\mathcal{N}$ are the same and $M_l,l\in\mathcal{L}_a$ are the same.

\wqb{In addition, extending the proof for the converse bound on the minimum average load in the scenario with the same file size and cache size \wqm{in}~\cite{jin}, which rests on \wqb{the genie-aided approach in~\cite{ji2015order} and the proof \wqb{of Theorem 3} in~\cite{Chien_Yi_Wang2018improved}}, we have the following result.}
\begin{Lem}[Converse Bound on Average Load]\label{L:converse_bound_avg}
The minimum average load \wqb{$R_{\rm avg}^*(\mathcal{L}_a,N,\mathbf{V},\mathbf{M})$} for the shared, error-free link caching network with \wqb{active users in $\mathcal{L}_a$}, library size $N\in\mathbb{N}$, file sizes $\mathbf{V}$ and cache sizes $\mathbf{M}$ under \wqb{arbitrary} file popularity distribution $\mathbf{p}$ satisfies
	\begin{align}
	 R_{\rm avg}^{*}(\mathcal{L}_a,N,\mathbf{V},\mathbf{M})\ge \max_{N^{'}\in\{1,\dots,N\}}\sum_{i=1}^{L_a} \left(N^{'}p_{N^{'}}\right)^i \left(1-N^{'}p_{N^{'}}\right)^{L_a-i}
		 \sum_{\substack{\mathcal{S}\subseteq\mathcal{L}_a: |\mathcal{S}|=i}} R_{\rm avg,unif}^{\rm lb}\left(\mathcal{S},N^{'},\mathbf{V},\mathbf{M}\right),\label{ineqn:converse_bound_avg}
	\end{align}
\end{Lem}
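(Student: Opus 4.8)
The plan is to reduce the average-load converse in the general scenario to the worst-case converse of Lemma~\ref{L:converse_bound_wst}, following the genie-aided reduction of~\cite{ji2015order} combined with the argument in~\cite{jin}. First I would fix an arbitrary $N'\in\{1,\dots,N\}$ and restrict attention to requests drawn only from the $N'$ most popular files $W_1,\dots,W_{N'}$ (recall $p_1\ge\dots\ge p_N$). For each active user $l\in\mathcal{L}_a$, independently declare user $l$ ``typical'' if it requests a file in $\{1,\dots,N'\}$ and, conditioned on that event, treat its request as uniform over those $N'$ files; a standard symmetrization/averaging argument (as in~\cite{ji2015order,jin}) shows that the average load under $\mathbf{p}$ is lower bounded by the expectation, over the random set $\mathcal{S}$ of typical users, of the average load of a sub-network serving only the users in $\mathcal{S}$ with a uniform demand over $N'$ files. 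Because each user is typical with probability $\sum_{n=1}^{N'}p_n\ge N'p_{N'}$ and we want a clean binomial form, I would lower bound this probability by $N'p_{N'}$, giving the factor $(N'p_{N'})^i(1-N'p_{N'})^{L_a-i}$ attached to each $\mathcal{S}$ with $|\mathcal{S}|=i$.

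Next I would invoke the quantity $R_{\rm avg,unif}^{\rm lb}(\mathcal{S},N',\mathbf{V},\mathbf{M})$, which should be defined (in the proof, Appendix C, or just before this lemma) as a lower bound on the minimum average load of the caching sub-network with active users $\mathcal{S}$, library size $N'$, file sizes $\mathbf{V}$ and cache sizes $\mathbf{M}$ under the \emph{uniform} demand distribution on $\{1,\dots,N'\}$. Exactly as in~\cite{jin}, this uniform-demand average-load bound is obtained from the worst-case converse machinery of Lemma~\ref{L:converse_bound_wst} (or rather the proof of Theorem~3 in~\cite{Chien_Yi_Wang2018improved}) applied to the restricted library of size $N'$: one averages the cut-set / genie inequalities over the uniform demand vectors instead of maximizing, which yields a bound of the same structure as~\eqref{ineqn:converse_bound_wst} with $N$ replaced by $N'$ and $\mathcal{L}_a$ replaced by $\mathcal{S}$. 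Assembling these pieces — summing the per-$\mathcal{S}$ bounds weighted by their binomial probabilities and then maximizing over $N'$ — gives precisely the right-hand side of~\eqref{ineqn:converse_bound_avg}.

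The technical core is the genie-aided step that licenses replacing the actual (arbitrary) demand distribution by a uniform distribution over the top $N'$ files while only losing a controlled amount: one must argue that demands outside $\{1,\dots,N'\}$ can only decrease the load (a file not requested need not be delivered), and that within $\{1,\dots,N'\}$ the non-uniform weights can be replaced by uniform ones via a convexity/symmetry averaging over relabelings of the $N'$ files, at the cost of the factor $N'p_{N'}$ rather than $\sum_{n\le N'}p_n$. This is the main obstacle, because in the general scenario the file sizes $\mathbf{V}$ are heterogeneous, so the usual symmetry-over-files averaging in~\cite{ji2015order,jin} is not immediate: one has to be careful that $R_{\rm avg,unif}^{\rm lb}$ is itself stated in a form that is either symmetric in, or monotone with respect to, the file sizes of the restricted library, so that the relabeling argument still goes through. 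I would handle this by keeping $R_{\rm avg,unif}^{\rm lb}(\mathcal{S},N',\mathbf{V},\mathbf{M})$ defined directly through the cut-set inequalities of the Lemma~\ref{L:converse_bound_wst} proof (which involve only the multiset $\{V_1,\dots,V_{N'}\}$ and the ordered cache sizes $M_{[\mathcal{S},i]}$), so that no per-file identification is needed and the averaging over typical users is the only probabilistic step. The remaining manipulations — the independence of the typicality indicators across users, the resulting product form, and the final $\max$ over $N'$ — are routine and parallel~\cite{jin} verbatim, so I would relegate them to Appendix C.
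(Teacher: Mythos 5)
Your proposal follows essentially the same route as the paper's proof (Appendix E): the genie-aided reduction of Theorem 2 in~\cite{ji2015order} yields the binomial weighting $(N'p_{N'})^i(1-N'p_{N'})^{L_a-i}$ over subsets of typical users, and $R_{\rm avg,unif}^{\rm lb}$ is then obtained by averaging the same per-demand information inequality used in the worst-case converse over uniform random demands (rather than maximizing), with the two alternative bounds on the mutual-information terms giving the $\min\{\cdot,\cdot\}$. The only small imprecision is your remark that the uniform-demand bound has ``the same structure'' as \eqref{ineqn:converse_bound_wst}: averaging over demands with repetitions introduces the Stirling-number coefficients and the $1-(1-1/N')^m$ factor (via the expected number of distinct demands), so the resulting expression \eqref{ineqn:converse_bound_avg_unif} differs in form, though the derivation is exactly the one you describe.
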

where
\begin{small}
\begin{align}
	&R_{\rm avg,unif}^{\rm lb}(\mathcal{S},N^{'},\mathbf{V},\mathbf{M})\triangleq\nonumber\\
		 & \max_{m\in\{1,\dots,\min\{N^{'},L_a\}\}}\Bigg\{\sum_{j=1}^{m}\frac{\binom{N^{'}-1}{j-1}j!\stirling{m}{j}}{{(N^{'})}^{m}}\sum_{i=1}^{N^{'}} V_{i}
		 -\min\Bigg\{\sum_{l=1}^{m}\frac{\sum_{i=1}^l M_{[\mathcal{S},i]}}{N^{'}},	\Bigg(1-\left(1-\frac{1}{N^{'}}\right )^{m}\Bigg)\sum_{i=1}^{m} M_{[\mathcal{S},i]}\Bigg\}	 \Bigg\}.\label{ineqn:converse_bound_avg_unif}
\end{align}
\end{small}	
\wqb{Here $\stirling{m}{j}$ denotes} the Stirling number of the second kind.
\begin{proof}
Please refer to Appendix E.	
\end{proof}

\wqb{In Lemma~\ref{L:converse_bound_avg}, $R_{\rm avg}^{\rm lb}(\mathcal{L}_a,N,\mathbf{V},\mathbf{M})$ and $R_{\rm avg,unif}^{\rm lb}(\mathcal{L}_a,N,\mathbf{V},\mathbf{M})$ represent the converse bounds on the average load under an arbitrary file popularity and the uniform file popularity, respectively.}
\wqb{It is clear that} the converse bound on the minimum average load for arbitrary file sizes and  cache sizes under an arbitrary file popularity reduces to the one in~\cite{jin} when $V_n,n\in\mathcal{N}$ are the same and $M_l,l\in\mathcal{L}_a$ are the same, and reduces to the one in~\cite{Chien_Yi_Wang2018improved} when $V_n,n\in\mathcal{N}$ are the same, $M_l,l\in\mathcal{L}_a$ are the same and the file popularity is uniform.
\section{Numerical Results}\label{Sec:Numerical}
\wqb{In this section, we compare the proposed solutions in Section~\ref{Sec:Worst-case_Load_Minimization} and Section~\ref{Sec:Average_Load_Minimization} with the existing ones \cite{AliDec,wang2015coded, zhang2015coded,cheng2017optimal,NonuniformDemands,ji2015order,zhang2018coded,Sinong} and the derived converse bounds in Section~\ref{Sec:Converse}.
As the computational \wqb{complexities} for the proposed stationary points \wqb{are} high for large $N$ or $K$, their performances are evaluated only at $N\le 4$ or $K\le 4$.
When obtaining the low-complexity approximate solutions, we choose $c=1$.
In addition, we adopt the existing solutions in \cite{AliDec,wang2015coded,NonuniformDemands,ji2015order,zhang2018coded,Sinong} under the file size $\max_{n\in\mathcal{N}}\{V_n\}$, and the existing solutions in \cite{AliDec,zhang2015coded,cheng2017optimal,NonuniformDemands,ji2015order,zhang2018coded,Sinong} under the cache size \wqb{$\min_{t\in\mathcal{T}}\{\overline{M}_t\}$}, as they are originally proposed for the scenarios with the same file size ($V_n,n\in\mathcal{N}$ are the same) \wqb{and} the same cache size \wqb{($\overline{M}_t,t\in\mathcal{T}$ are the same), respectively}.}

\wqb{In the simulation, for ease of illustration, we assume file sizes (in $10^3$ data units) and cache sizes (in $10^3$ data units) are arithmetic sequences, i.e., $V_n=V_1+(n-1)\Delta V,n\in\mathcal{N}$ and $\overline{M}_t=\overline{M}_1+(k-1)\Delta \overline{M},t\in\mathcal{T}$, where $\Delta V$ and $\Delta \overline{M}$ represent the common differences of the two sequences, respectively.
\wqb{In addition,} in the average case, we assume \wqb{that} the file popularity follows Zipf distribution, i.e., $p_n=\frac{n^{-\gamma}}{\sum_{n\in\mathcal{N}}n^{-\gamma}}$ for all $n\in\mathcal{N}$, where $\gamma$ is the Zipf parameter~\cite{NonuniformDemands,ji2015order,zhang2018coded,Sinong}.}
\wqb{We also consider different choices for parameter $\mathbf{K}$ for the proposed solutions and those in~\cite{ji2015order,Sinong} and \cite{cheng2017optimal}.}
\begin{figure}
\begin{center}
  \subfigure[$V_1=10,\ \Delta V=-1,\ T=4,\ \overline{M}_1=5.5$ and $\Delta \overline{M}=5.5$.]
  {\resizebox{7cm}{!}{\includegraphics{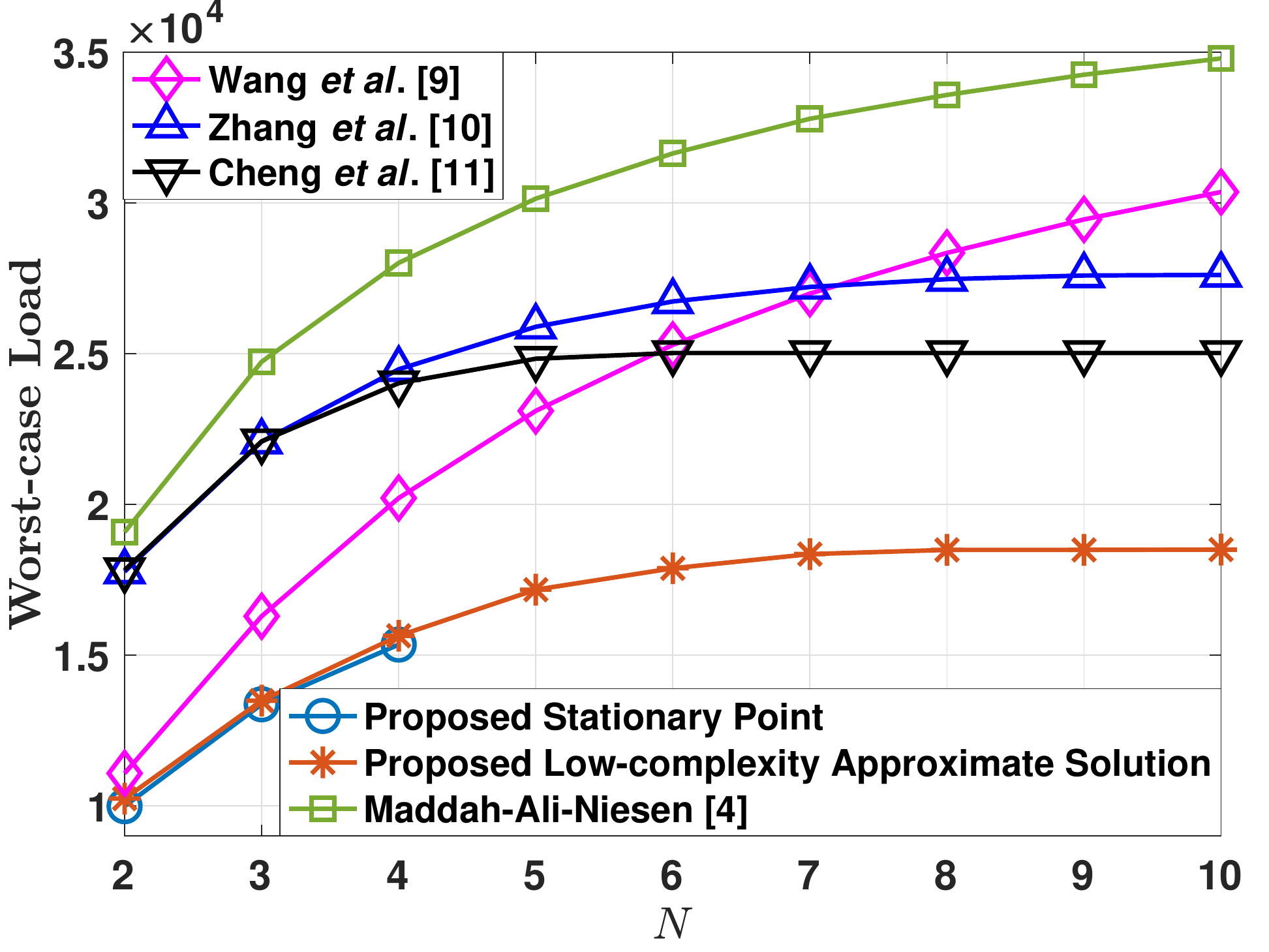}}}
  \quad
  \subfigure[$\ N=4,\ V_1=13,\ \Delta V=-4,\ \overline{M}_1=5$ and $\Delta \overline{M}=1$.]
  {\resizebox{7cm}{!}{\includegraphics{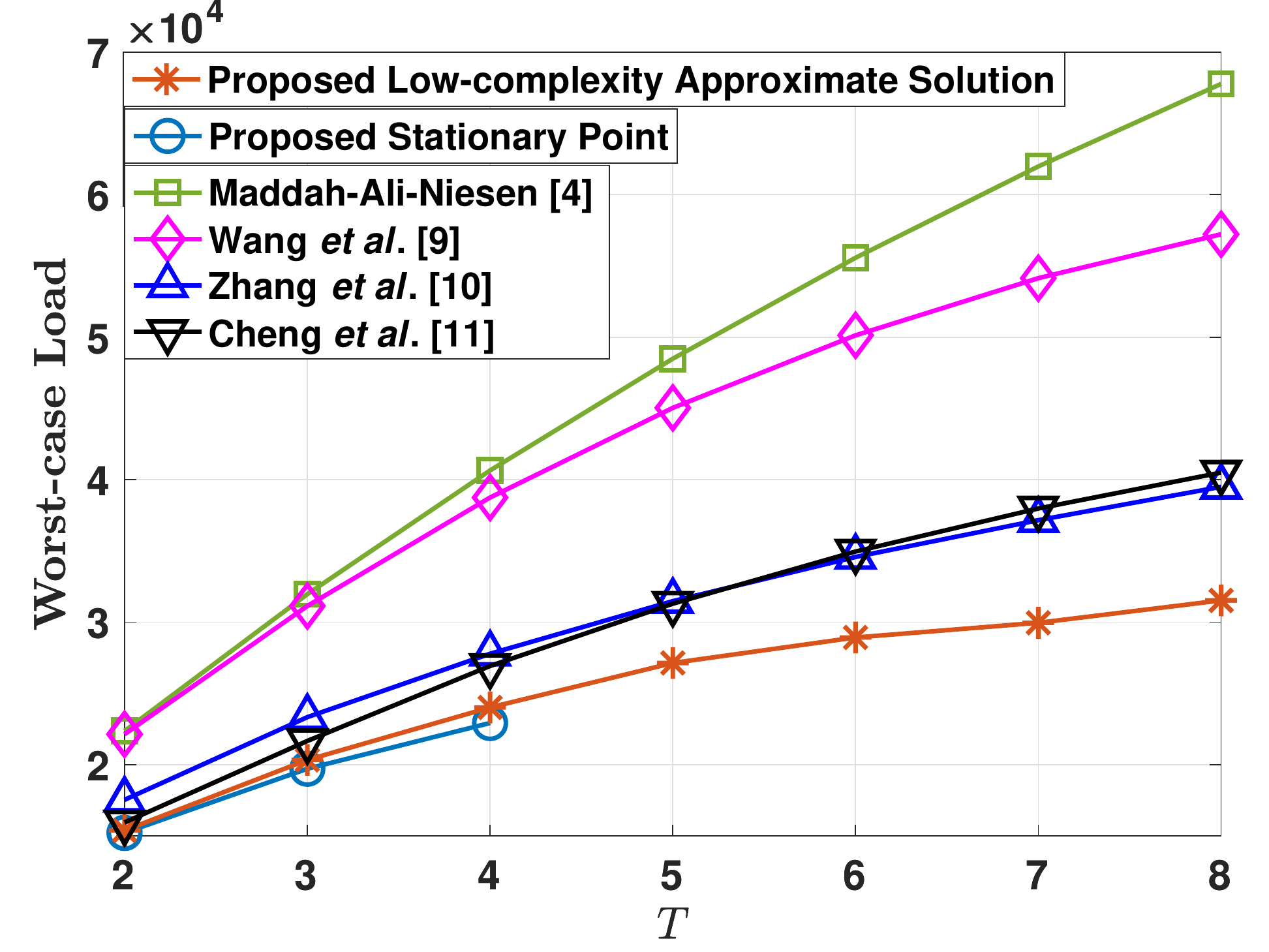}}}
  \end{center}
         \caption{\small{Worst-case load versus $N$ and $T$ when
         $\mathbf{L}_a=(L_{a,t})_{t\in\mathcal{T}}$ with $L_{a,t}=1,\ t\in\mathcal{T}$ and $\mathbf{K}=\mathbf{L}_a$.}
         \vspace{-8mm}
         }\label{fig:simulation1}
\end{figure}
\begin{figure}
\begin{center}
  \subfigure[$V_1=25.5-24.5\Delta V,\ \overline{M}_1=140$ and $\Delta \overline{M}=120$.]
  {\resizebox{7cm}{!}{\includegraphics{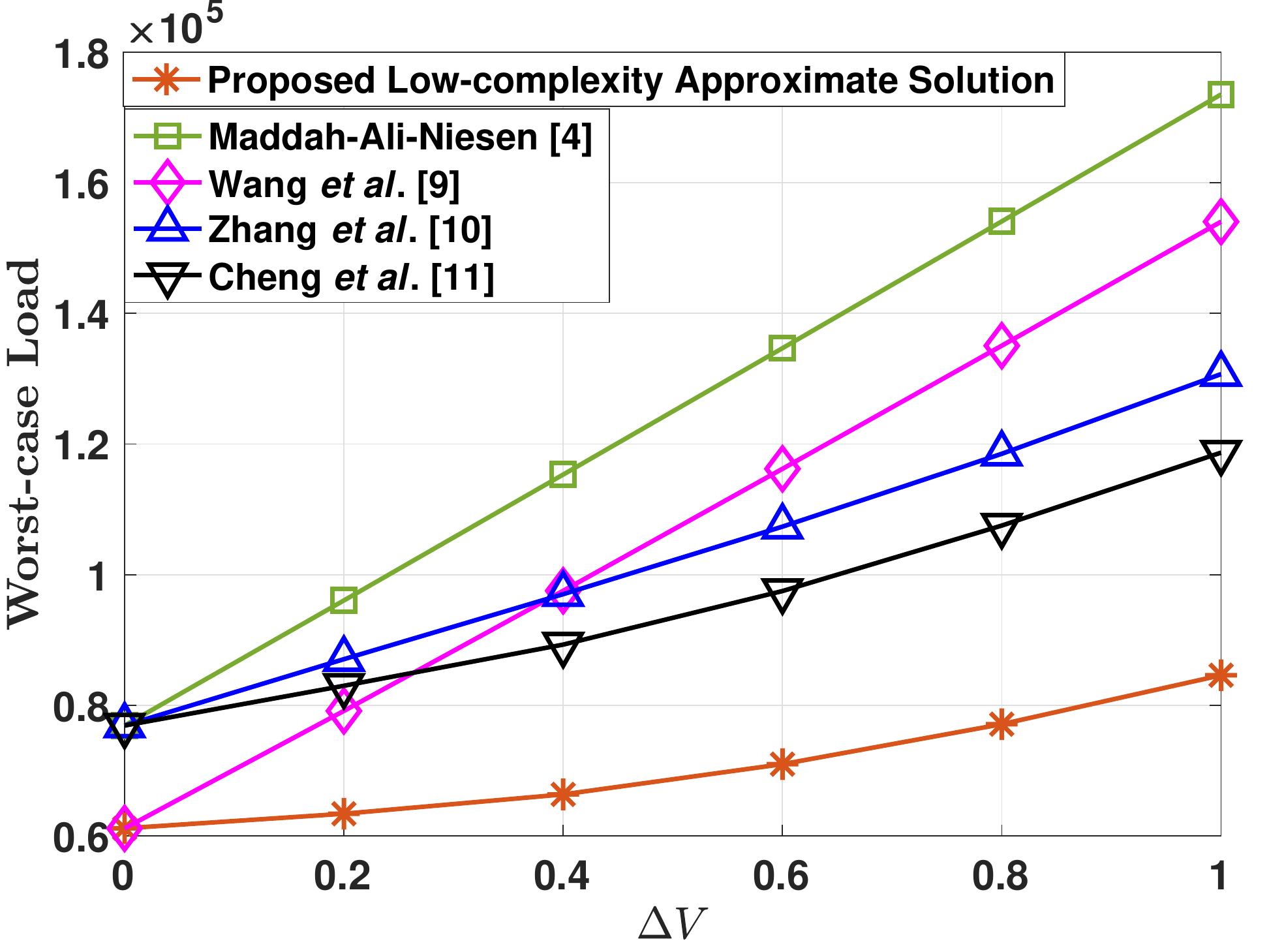}}}
  \quad
  \subfigure[$V_1=1,\ \Delta V=1$ and $\overline{M}_1=318.75-1.5\Delta \overline{M}$.]
  {\resizebox{7cm}{!}{\includegraphics{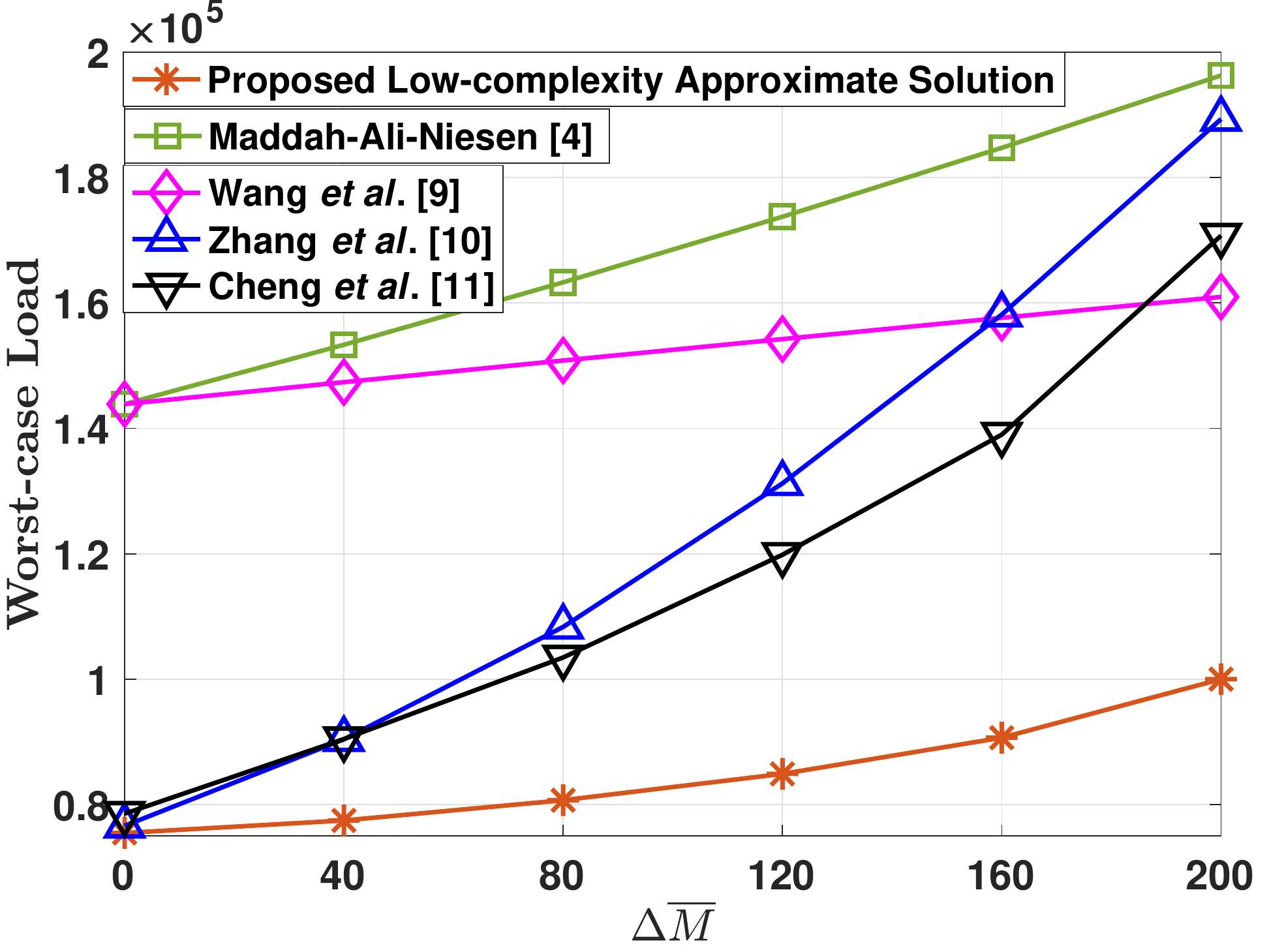}}}
  \end{center}
  		  \vspace{-4mm}
         \caption{\small{Worst-case load versus $\Delta  V$ and $\Delta  \overline{M}$ when 
         $N=50,\ T=4,\ \mathbf{L}_a=(L_{a,t})_{t\in\mathcal{T}}$ with $L_{a,t}=1,\ t\in\mathcal{T}$ and $\mathbf{K}=\mathbf{L}_a$.}
         \vspace{-8mm}
         }\label{fig:simulation2}
\end{figure}
\begin{figure}
\begin{center}
  \subfigure[$N=50,\ V_1=1,\ \Delta V=1,\ T=4,\ \overline{M}_1=80\overline{M}_0,\ \Delta \overline{M}=160\overline{M}_0,\ \mathbf{L}_a=(L_{a,t})_{t\in\mathcal{T}}$ with $L_{a,t}=1,\ t\in\mathcal{T}$ and $\mathbf{K}=\mathbf{L}_a$.]
  {\resizebox{7cm}{!}{\includegraphics{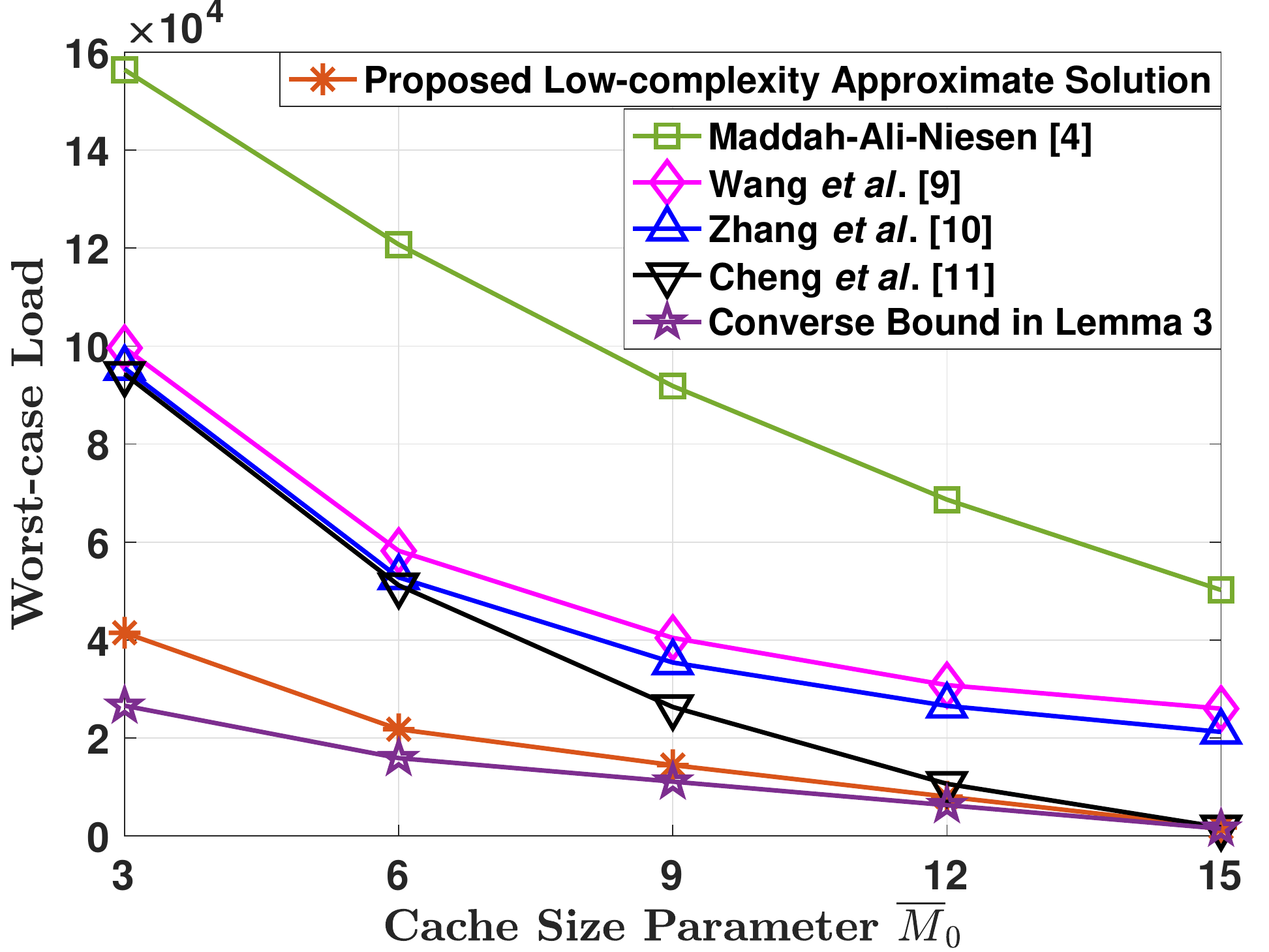}}}
  \quad
  \subfigure[$N=10,\ V_1=10,\ \Delta V=-1,\ T=2,\ \overline{M}_1=5\overline{M}_0,\ \overline{M}_2=20\overline{M}_0,\ L_1=4,\ L_2=2$ and $\mathbf{K}=(K_t)_{t\in\mathcal{T}}$ with $K_t=0.5L_t,\ t\in\mathcal{T}$.
  			\wqb{Note that we assume that each user is active with probability 0.5, and show the average (over random $\mathbf{L}_a$) of the worst-case load of each scheme and the average of the converse bound in Lemma \ref{L:converse_bound_wst}.}]
  {\resizebox{7cm}{!}{\includegraphics{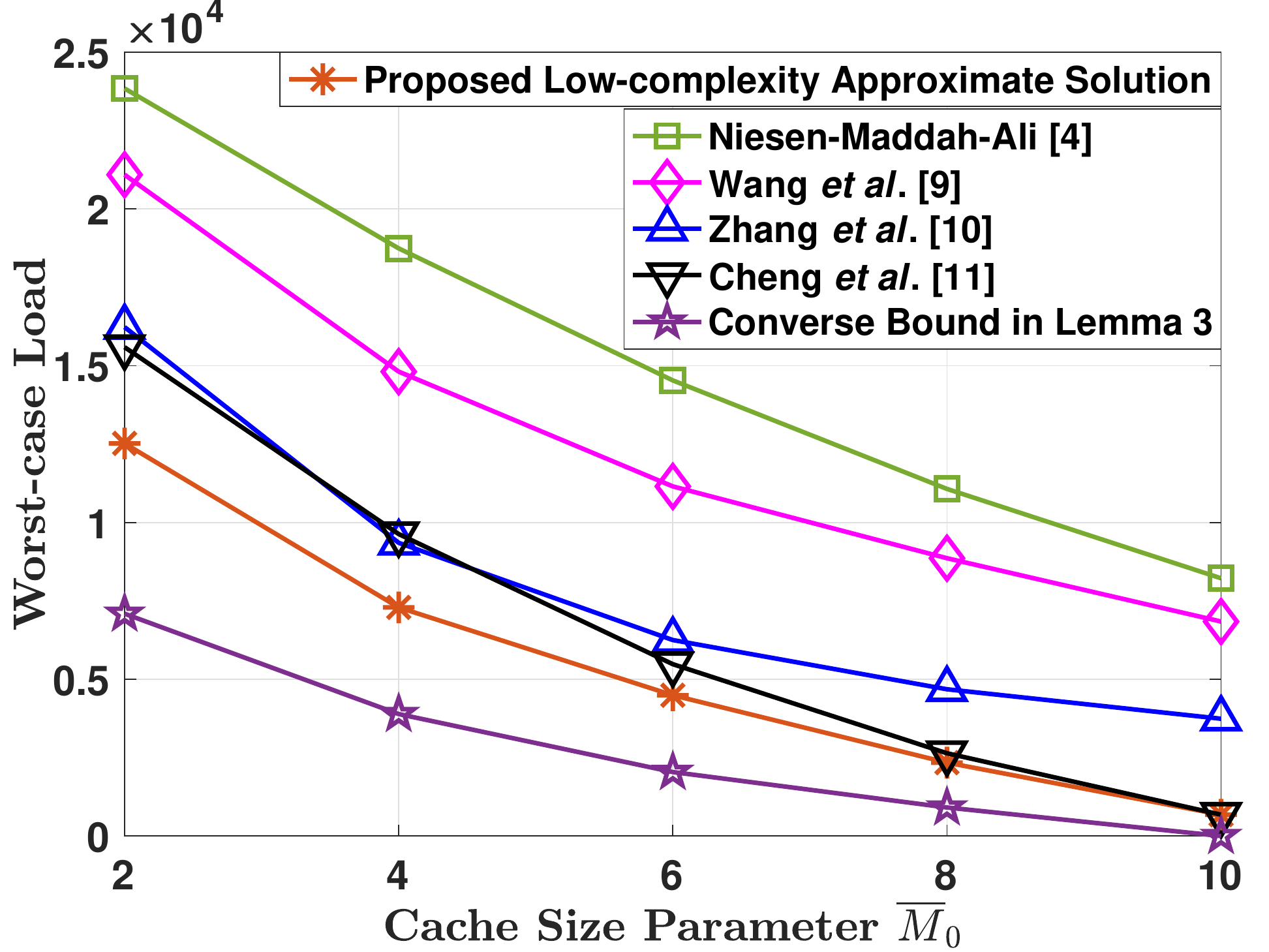}}}
  \end{center}
         \vspace{-4mm}
         \caption{\small{Worst-case load versus cache size parameter $\overline{M}_0$.}
         \vspace{-8mm}
         }\label{fig:simulation3}
\end{figure}
\subsection{Comparison of Worst-case Loads}\label{Subsec:Numerical_Comparison_wst}
In this part, we compare \wqb{the worst-case loads of} the proposed stationary point and low-complexity \wqb{approximate} solution \wqb{in Section~\ref{Sec:Worst-case_Load_Minimization}} with \wqb{those of} the existing solutions in~\cite{AliDec,wang2015coded,zhang2015coded,cheng2017optimal} and the converse bound in Lemma~\ref{L:converse_bound_wst}.

Fig.~\ref{fig:simulation1} (a) and Fig.~\ref{fig:simulation1} (b) illustrate the worst-case loads of the six schemes versus $N$ and \wqb{$T$ (which is the same as $K$ in Fig.~\ref{fig:simulation1})}, respectively.
From Fig.~\ref{fig:simulation1}, we can see that the performance gap between the \wqb{proposed} stationary point and low-complexity \wqb{approximate} solution is rather small, 
which shows a promising prospect of our low-complexity \wqb{approximate} solution.

Fig.~\ref{fig:simulation2} (a) and Fig.~\ref{fig:simulation2} (b) illustrate the worst-case loads of the proposed low-complexity \wqb{approximate} solution and the four baseline schemes versus $\Delta V$ and \wqb{$\Delta \overline{M}$}, respectively.
From Fig.~\ref{fig:simulation2}, we can see that the \wqb{worst-case} load of the decentralized coded caching scheme in~\cite{AliDec} increases rapidly with both $\Delta V$ and \wqb{$\Delta \overline{M}$}.
The \wqb{worst-case} load of the decentralized coded caching scheme in~\cite{wang2015coded} increases rapidly with $\Delta V$, but increases slowly with \wqb{$\Delta \overline{M}$}. 
In contrast, the \wqb{worst-case} loads of the decentralized coded caching schemes in~\cite{zhang2015coded} and \cite{cheng2017optimal} increase slowly with $\Delta V$, but increase rapidly with \wqb{$\Delta \overline{M}$}.
Note that the \wqb{worst-case} load of the proposed low-complexity \wqb{approximate} solution increases slowly with both $\Delta V$ and \wqb{$\Delta \overline{M}$}, \wqb{indicating that it well adapts to the changes of file sizes and cache sizes.}

Fig.~\ref{fig:simulation3} (a) and Fig.~\ref{fig:simulation3} (b) illustrate the worst-case loads of the proposed low-complexity \wqb{approximate} solution \wqb{and} the four baseline schemes, and the converse bound \wqb{in Lemma~\ref{L:converse_bound_wst}} versus \wqb{$\overline{M}_0$}, which determines \wqb{$\left(\overline{M}_t\right)_{t\in\mathcal{T}}$} as illustrated in the caption of Fig.~\ref{fig:simulation3}.
From Fig.~\ref{fig:simulation3}, we can see that the \wqb{worst-case} load of the low-complexity \wqb{approximate} solution is close to the converse bound \wqb{in Lemma~\ref{L:converse_bound_wst}}, implying that it is close to optimal.

From Fig.~\ref{fig:simulation1}, Fig.~\ref{fig:simulation2} and Fig.~\ref{fig:simulation3}, we can see that, the two proposed solutions outperform the four baseline schemes in~\cite{AliDec,wang2015coded,zhang2015coded,cheng2017optimal} at the system parameters considered in the simulation.
Their gains over \wqb{the one} in \cite{AliDec} are due to the adaptation to arbitrary file sizes and cache sizes;
their gains over \wqb{the one} in \cite{wang2015coded} follow by the consideration of arbitrary file sizes;
and their gains over \wqb{the schemes} in \cite{zhang2015coded} and \cite{cheng2017optimal} follow by the consideration of arbitrary cache sizes.

\begin{figure}
\begin{center}
  \subfigure[$V_1=20,\ \Delta V=-1,\ \gamma=1.2,\ T=4,\ \overline{M}_1=5.5$ and $\Delta \overline{M}=5.5$.]
  {\resizebox{7cm}{!}{\includegraphics{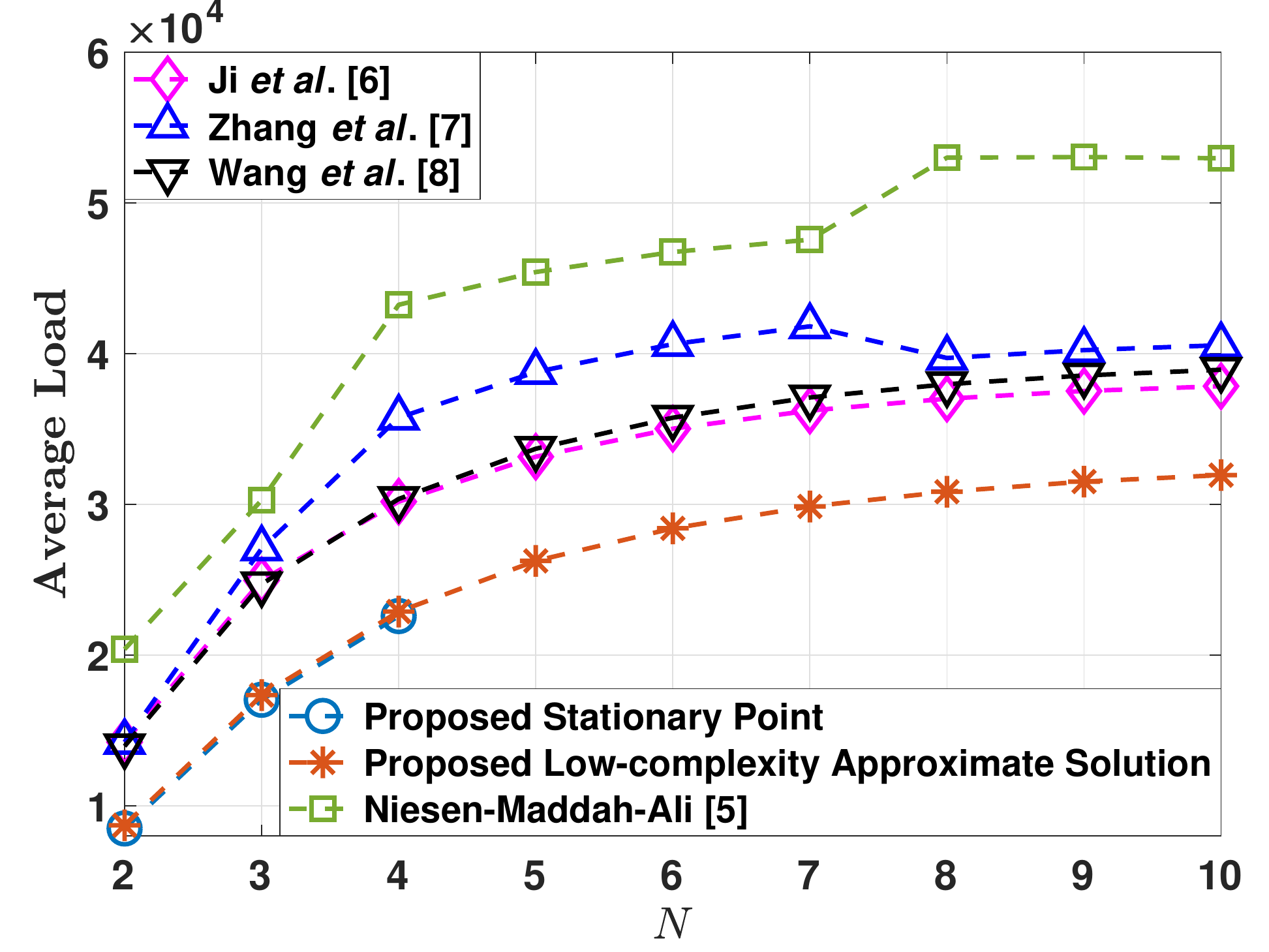}}}
  \quad
  \subfigure[$N=4,\ V_1=23,\ \Delta V=-4,\ \gamma=1.2,\ \overline{M}_1=5$ and $\Delta \overline{M}=1$.]
  {\resizebox{7cm}{!}{\includegraphics{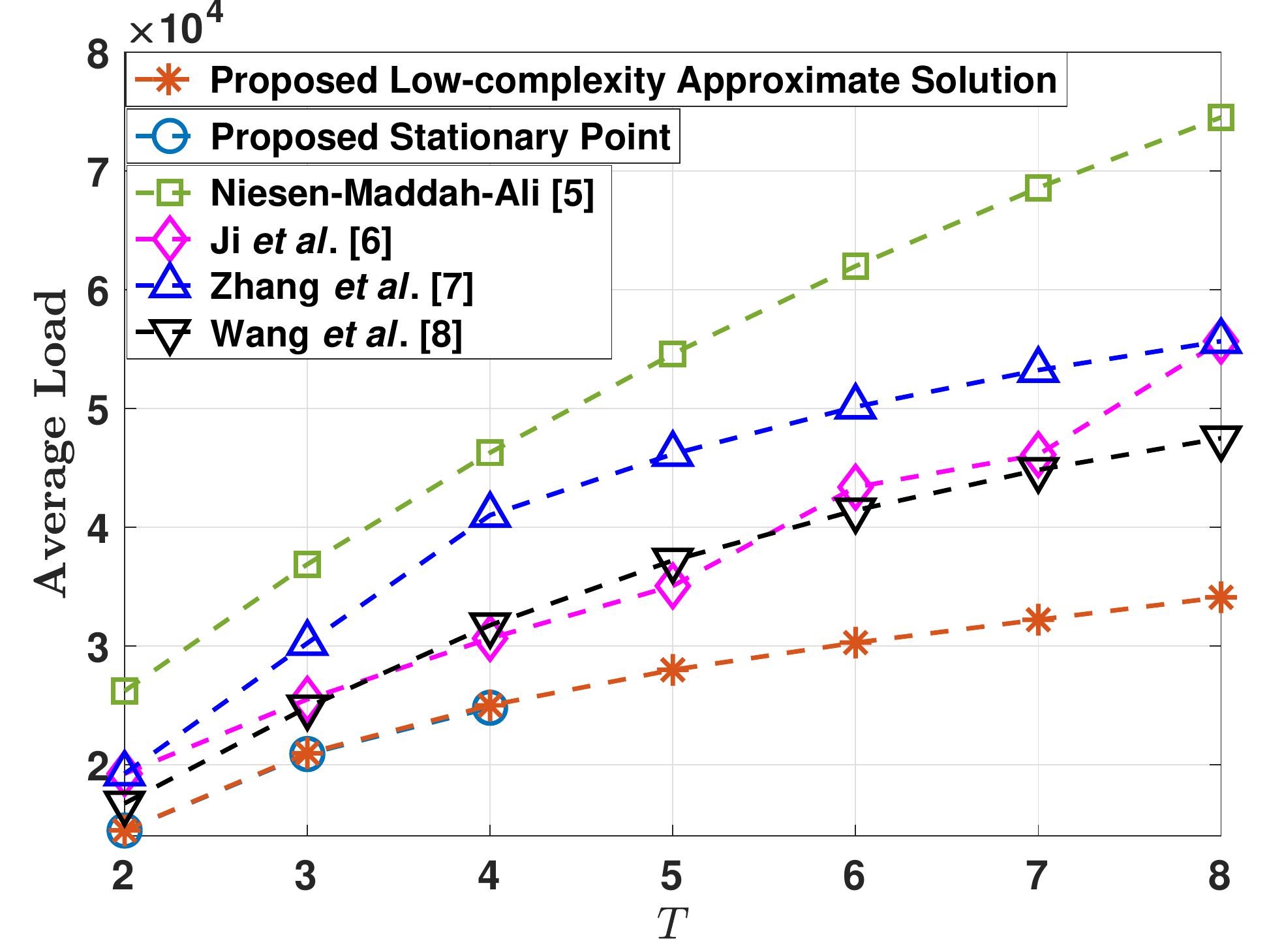}}}
  \end{center}
         \caption{\small{Average load versus $N$ and $T$ when
         $\mathbf{L}_a=(L_{a,t})_{t\in\mathcal{T}}$ with $L_{a,t}=1,\ t\in\mathcal{T}$ and $\mathbf{K}=\mathbf{L}_a$.}
         }\label{fig:simulation_avg1}
\end{figure}
\begin{figure}
\begin{center}
  \subfigure[$V_1=25.5-24.5\Delta V,\ \gamma=1.2,\ \overline{M}_1=140$ and $\Delta \overline{M}=120$.]
  {\resizebox{7cm}{!}{\includegraphics{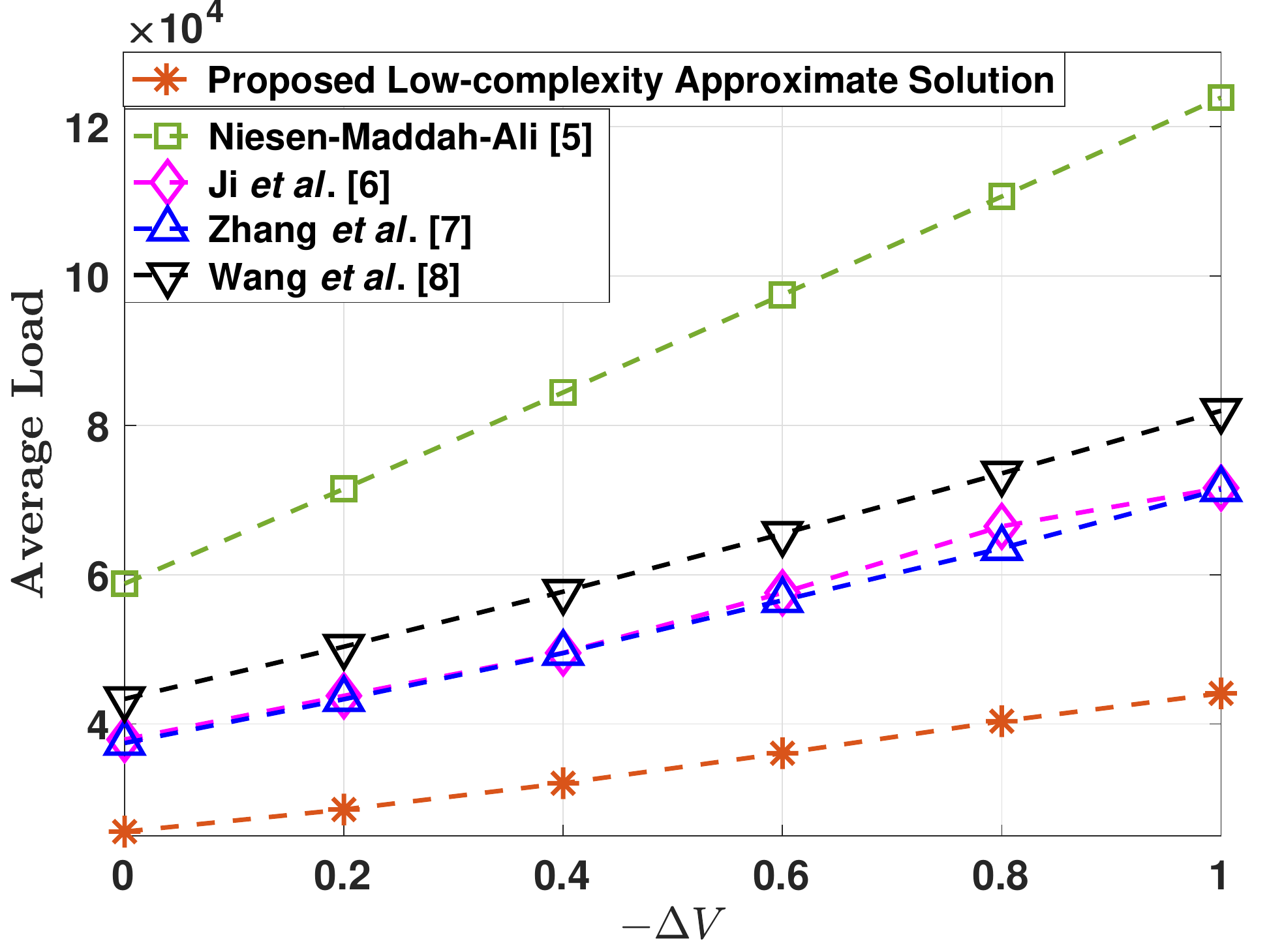}}}
  \quad
  \subfigure[$V_1=50,\ \Delta V=-1,\ \gamma=1.2$ and $\overline{M}_1=318.75-1.5\Delta \overline{M}$.]
  {\resizebox{7cm}{!}{\includegraphics{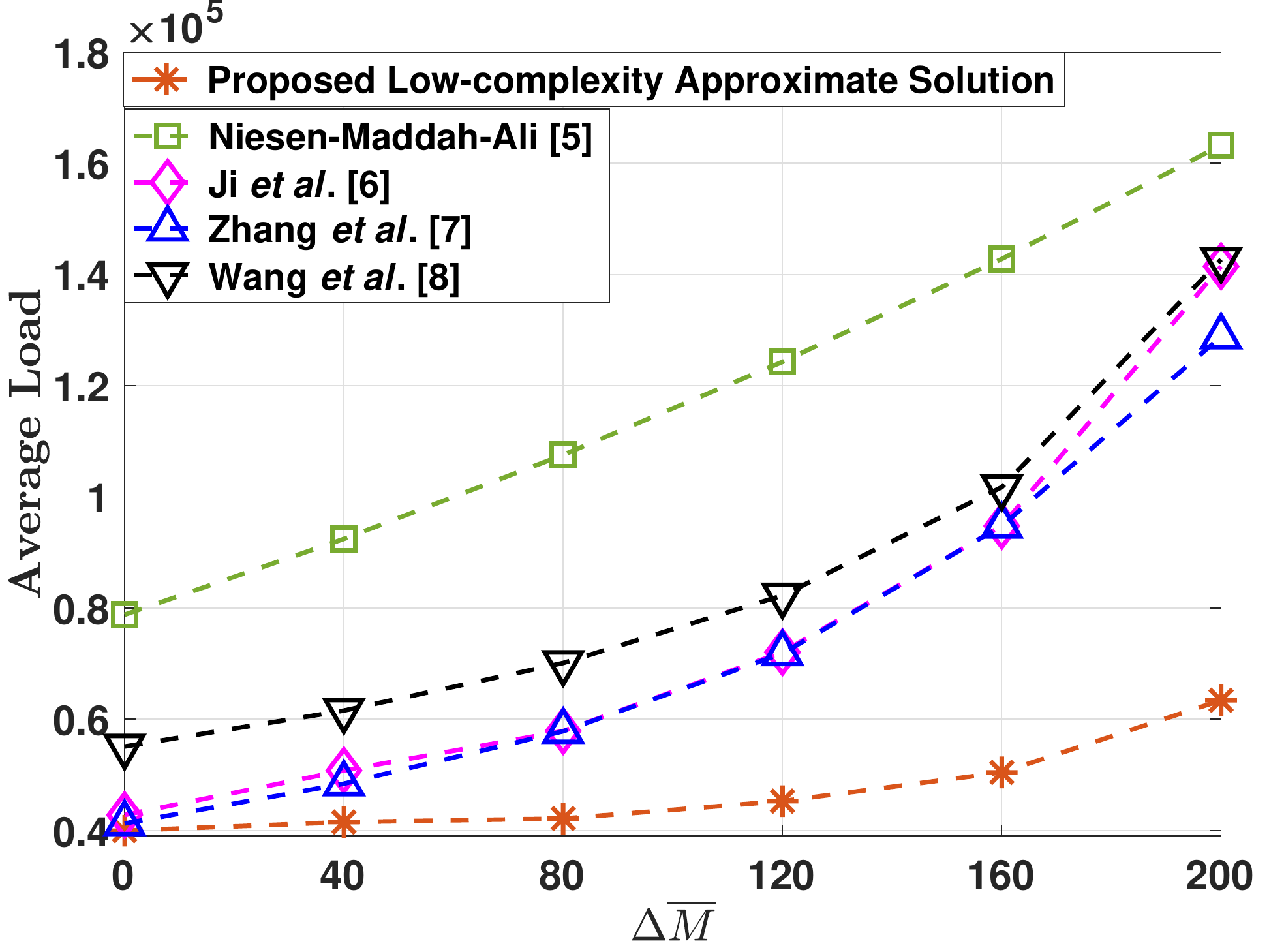}}}
  \end{center}
         \caption{\small{Average load versus $\Delta V$ and $\Delta  \overline{M}$ when
         $N=50,\ T=4,\ \mathbf{L}_a=(L_{a,t})_{t\in\mathcal{T}}$ with $L_{a,t}=1,\ t\in\mathcal{T}$ and $\mathbf{K}=\mathbf{L}_a$.}
         \vspace{-8mm}
         }\label{fig:simulation_avg2}
\end{figure}
\begin{figure}
\begin{center}
  {\resizebox{7cm}{!}{\includegraphics{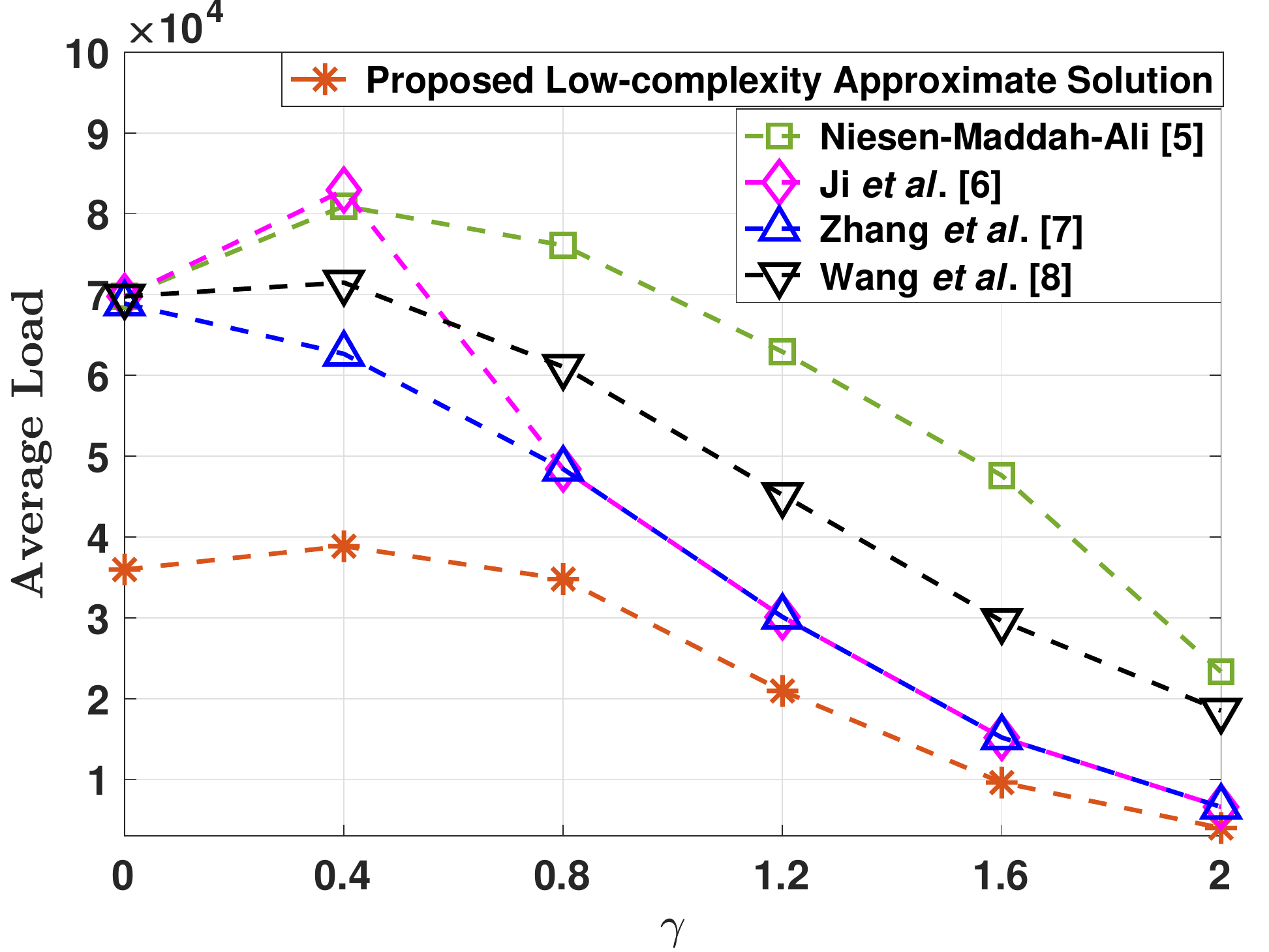}}}
         \caption{\small{Average load versus $\gamma$ when $N=50,\ V_1=50,\ \Delta V=-1,\ T=4,\ \overline{M}_1=450,\ \Delta \overline{M}=90,\ \mathbf{L}_a=(L_{a,t})_{t\in\mathcal{T}}$ with $L_{a,t}=1,\ t\in\mathcal{T}$ and $\mathbf{K}=\mathbf{L}_a$.}
         }\label{fig:simulation_avg3}
\end{center}
\end{figure}
\begin{figure}
\begin{center}
  \subfigure[$N=50,\ V_1=50,\ \Delta V=-1,\ \gamma=1,\ T=4,\ \overline{M}_1=50\overline{M}_0,\ \Delta \overline{M}=10\overline{M}_0,\ \mathbf{L}_a=(L_{a,t})_{t\in\mathcal{T}}$ with $L_{a,t}=1,\ t\in\mathcal{T}$ and $\mathbf{K}=\mathbf{L}_a$]
  {\resizebox{7cm}{!}{\includegraphics{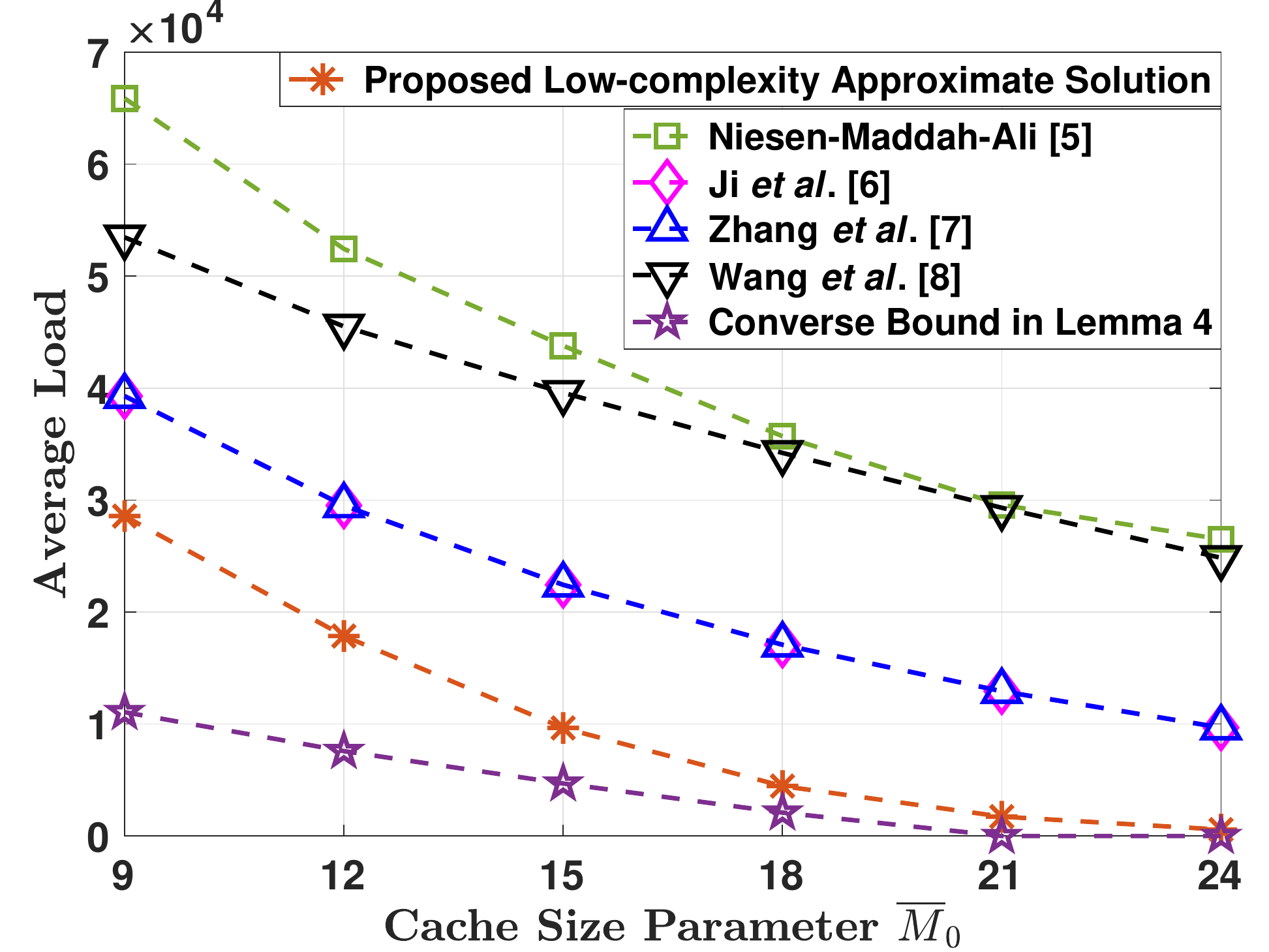}}}
  \quad
  \subfigure[$N=10,\ V_1=10,\ \Delta V=-1,\ \gamma=1.5,\ T=2,\ \overline{M}_1=5\overline{M}_0,\ \overline{M}_2=20\overline{M}_0,\ L_1=4,\ L_2=2$ and $\mathbf{K}=(K_t)_{t\in\mathcal{T}}$ with $K_t=0.5L_t, t\in\mathcal{T}$.
  			\wqb{We assume that each user is active with probability 0.5, and show the \wqb{expectations} (over random $\mathbf{L}_a$) of the average load of each scheme and the converse bound in Lemma \ref{L:converse_bound_avg}.}]
  {\resizebox{7cm}{!}{\includegraphics{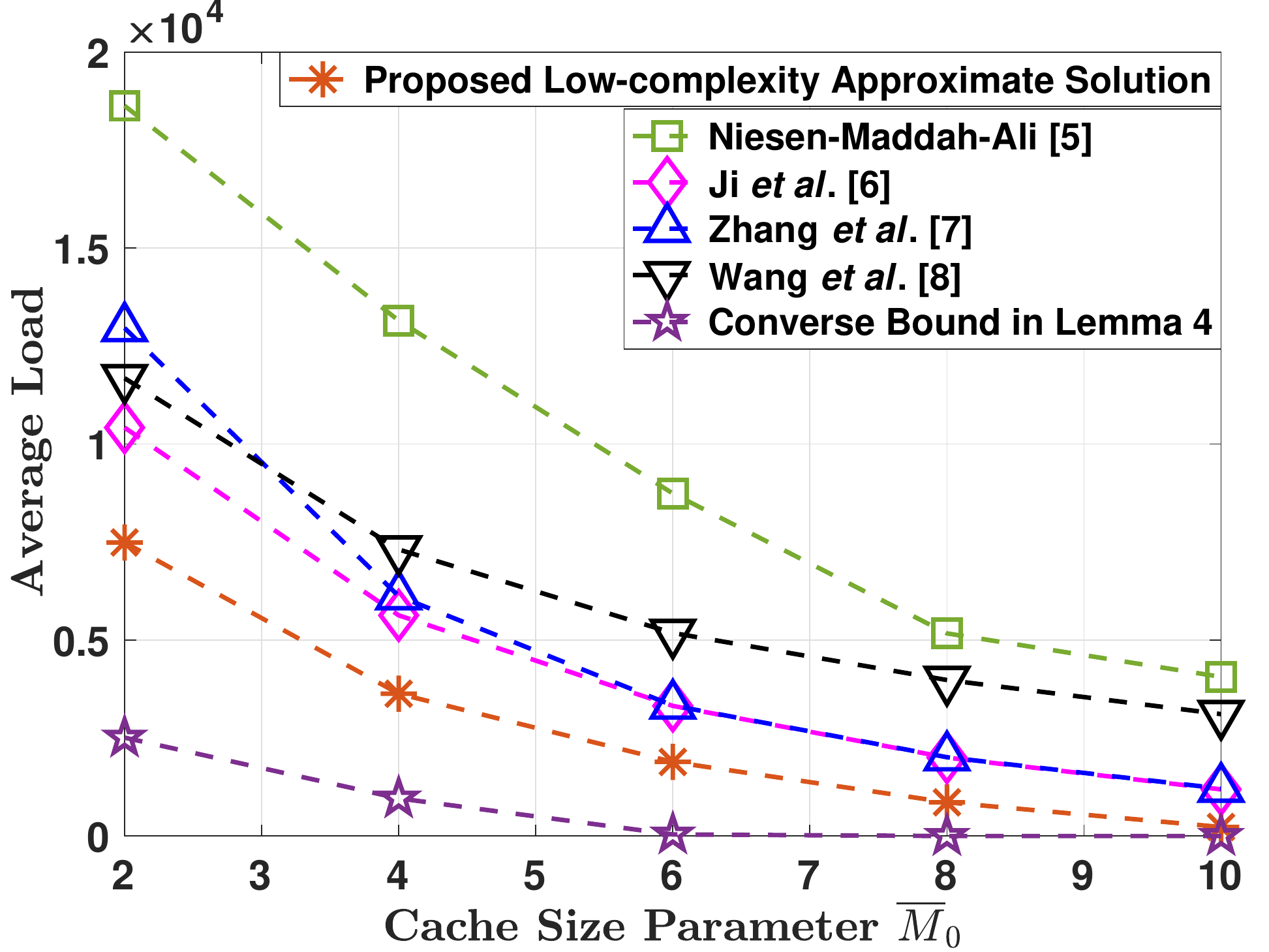}}}
  \end{center}
         \caption{\small{Average load versus cache size parameter $\overline{M}_0$.}
         \vspace{-8mm}
         }\label{fig:simulation_avg4}
\end{figure}
\subsection{Comparison of Average Loads}\label{Subsec:Numerical_Comparison_avg}
In this part, we compare \wqb{the average loads of} the proposed stationary points and low-complexity \wqb{approximate} solution \wqb{in Section~\ref{Sec:Average_Load_Minimization}} with the existing solutions in~\cite{NonuniformDemands,ji2015order,zhang2018coded,Sinong} and the converse bound in Lemma~\ref{L:converse_bound_avg}.
Fig.~\ref{fig:simulation_avg1} (a) and Fig.~\ref{fig:simulation_avg1} (b) illustrate the average loads of the six schemes versus $N$ and \wqb{$T$ (which is the same as $K$ in Fig.~\ref{fig:simulation_avg1})}, respectively.
Fig.~\ref{fig:simulation_avg2} (a) and Fig.~\ref{fig:simulation_avg2} (b) illustrate the average loads of the proposed low-complexity \wqb{approximate} solution and the four baseline schemes versus $\Delta V$ and \wqb{$\Delta \overline{M}$}, respectively.
Fig.~\ref{fig:simulation_avg4} (a) and Fig.~\ref{fig:simulation_avg4} (b) illustrate the average loads of the proposed low-complexity \wqb{approximate} solution \wqb{and} the four baseline schemes, and the converse bound \wqb{in Lemma~\ref{L:converse_bound_avg}} versus cache size parameter \wqb{$\overline{M}_0$}, which determines \wqb{$\left(\overline{M}_t\right)_{t\in\mathcal{T}}$} as illustrated in the caption of Fig.~\ref{fig:simulation_avg4}.
\wqb{Similar observations can be made from Fig.~\ref{fig:simulation1}, Fig.~\ref{fig:simulation2} and Fig.~\ref{fig:simulation3}.
Note that the average loads of \wqb{the existing solutions in\wqm{\cite{NonuniformDemands,ji2015order,zhang2018coded}}} do not change \wqb{smoothly} with some \wqb{system} parameters, as their design parameters are heuristically chosen.}
Fig.~\ref{fig:simulation_avg3} illustrates the average loads of the proposed low-complexity \wqb{approximate} solution and the four baseline schemes versus $\gamma$.
From Fig.~\ref{fig:simulation_avg3}, we can see that among all values of $\gamma$ considered in the simulation, the average load of the proposed low-complexity \wqb{approximate} solution is \wqb{smaller} than those of the decentralized coded caching schemes in~\cite{NonuniformDemands,ji2015order,zhang2018coded,Sinong}, \wqb{indicating that it well adapts to the changes of file popularity.}


From Fig.~\ref{fig:simulation_avg1}, Fig.~\ref{fig:simulation_avg2}, Fig.~\ref{fig:simulation_avg3} and Fig.~\ref{fig:simulation_avg4}, we can see that, the two proposed solutions outperform the four baseline schemes in~\cite{NonuniformDemands,ji2015order,zhang2018coded,Sinong} at the system parameters considered in the simulation, \wqb{owing to their adaptations to  file sizes and cache sizes.}
\section{Conclusion}\label{Sec:Conclusion}
In this paper, we proposed an optimization framework for decentralized coded caching in the general scenario \wqb{with arbitrary file sizes and cache sizes} to minimize the worst-case load \wqb{and} average load.
Specifically, we first proposed a class of decentralized coded caching schemes which are specified by a general caching parameter and include several known schemes as special cases.
Then, we formulated two coded caching design optimization problems over the considered class of schemes to minimize the worst-case load and average load, respectively, with respect to the caching parameter.
\wqb{For each challenging nonconvex optimization problem,} we \wqb{developed} an iterative algorithm to obtain a stationary point and \wqb{proposed} a low-complexity \wqb{approximate} solution with performance guarantee. 
In addition, we presented two information-theoretic converse bounds on the worst-case load and average load (under an arbitrary file popularity) in the general scenario, respectively.
To the best of our knowledge, this is the first work \wqb{that provides optimization-based decentralized coded caching \wqb{schemes} and information-theoretic converse bounds in the general scenario.}
Finally, numerical results showed that the proposed solutions outperform the existing schemes in the general scenario \wqb{and highlighted the importance of designing optimization-based decentralized coded caching schemes for the general scenario}. 
\section*{Appendix A: Proof  of Lemma~\ref{L:lem1}}\label{Appendix_thm1}
First, for all $t\in\mathcal{T}$ and $n\in\mathcal{N}$, we introduce an auxiliary variable $x_{t,n}\ge 0$, replace $1-q_{t,n}$ in $R_{\rm wst}(\mathbf{K},N,\mathbf{V},\mathbf{q})$ with $x_{t,n}$, and add the inequality constraint
\begin{equation}
	1-q_{t,n}\le x_{t,n},\quad\forall t\in\mathcal{T},\ n\in\mathcal{N}.\label{ineqn:auxiliary1}
\end{equation}
Next, for all $\mathbf{n}\in \mathcal{N}^K$, $\mathcal{S}\subseteq \mathcal{K}$, we introduce an auxiliary variable $w_{\mathbf{n},\mathcal{S}}\ge 0$, 
replace $$\max_{j\in \mathcal{S}}
\Bigg(\prod_{a\in \mathcal{S}\backslash\{j\}}\prod_{\substack{t:a\in\mathcal{I}_t(\mathbf{K})}}q_{t,n_j}\Bigg)
\Bigg(\prod_{\substack{b\in\mathcal{K}\backslash  (\mathcal{S}\backslash\{j\})}}\prod_{t:b\in\mathcal{I}_t(\mathbf{K})}(1-q_{t,n_j})\Bigg)V_{n_j}$$ with $w_{\mathbf{n},\mathcal{S}}$, and add the inequality constraint
\begin{align}
	\max_{j\in \mathcal{S}}
\Bigg(\prod_{a\in \mathcal{S}\backslash\{j\}}\prod_{\substack{t:a\in\mathcal{I}_t(\mathbf{K})}}q_{t,n_j}\Bigg)
\Bigg(\prod_{\substack{b\in\mathcal{K}\backslash  (\mathcal{S}\backslash\{j\})}}\prod_{t:b\in\mathcal{I}_t(\mathbf{K})}(1-q_{t,n_j})\Bigg)V_{n_j}\le w_{\mathbf{n},\mathcal{S}},
\quad \forall\mathbf{n}\in \mathcal{N}^K,\ \mathcal{S}\subseteq \mathcal{K}.\label{ineqn:auxiliary2}
\end{align}
Finally, we introduce an auxiliary variable $u\ge 0$, replace $\underset{\mathbf{n}\in \mathcal{N}^{K}}\max\underset{s=1}{\overset{K}{\sum}}\underset{\mathcal{S}\subseteq\mathcal{K}:|\mathcal{S}|=s }{\sum} w_{\mathbf{n},\mathcal{S}}$ with $u$, and add the inequality constraint
\begin{equation}
	\max_{\mathbf{n}\in \mathcal{N}^K}\sum_{s=1}^K\sum_{\mathcal{S}\subseteq\mathcal{K}:|\mathcal{S}|=s }w_{\mathbf{n},\mathcal{S}}\le u.\label{ineqn:auxiliary3}
\end{equation}

Thus, we can equivalently convert Problem~\ref{P:Worst Case1} to Problem~\ref{P:Worst Case2}.
Therefore, we complete the proof.
\section*{Appendix B: Proof  of Theorem~\ref{T:thm1}}\label{Appendix_thm1}
We first prove \eqref{ineqn:T1_1}.
For notation simplicity, denote $V_{\mathbf{n},\mathcal{S},j}\triangleq \Bigg(\underset{a\in \mathcal{S}\backslash\{j\}}\prod \underset{\substack{t:a\in\mathcal{I}_t(\mathbf{K})}}\prod q_{t,n_j}\Bigg)$ $\ \times\ $
$\Bigg(\underset{\substack{b\in\mathcal{K}\\ \backslash  (\mathcal{S}\backslash\{j\})}}\prod \underset{t:b\in\mathcal{I}_t(\mathbf{K})}\prod (1-q_{t,n_j})\Bigg)V_{n_j}.$
By \eqref{eqn:worst_load_1}, we have
\begin{align}
&\quad R_{\rm wst}(\mathbf{K},N,\mathbf{V},\mathbf{q})\overset{(a)}\ge \max_{\mathbf{n}\in \mathcal{N}^K}\sum_{s=1}^K\sum_{\mathcal{S}\subseteq\mathcal{K}:|\mathcal{S}|=s}\left(\frac{1}{c}\ln \Bigg(\sum_{j\in \mathcal{S}}e^{cV_{\mathbf{n},\mathcal{S},j}}\Bigg)-\frac{1}{c}\ln s\right)\nonumber\\
	&=\max_{\mathbf{n}\in \mathcal{N}^K}\sum_{s=1}^K\sum_{\mathcal{S}\subseteq\mathcal{K}:|\mathcal{S}|=s}\frac{1}{c}\ln \Bigg(\sum_{j\in \mathcal{S}}e^{cV_{\mathbf{n},\mathcal{S},j}}\Bigg)-	\sum_{s=1}^K\sum_{\mathcal{S}\subseteq\mathcal{K}:|\mathcal{S}|=s}\frac{1}{c}\ln s\nonumber\\
	&\overset{(b)}\ge \frac{1}{c}\ln\sum_{\mathbf{n}\in \mathcal{N}^K}e^{c\underset{s=1}{\overset{K}\sum} \underset{\mathcal{S}\subseteq\mathcal{K}:|\mathcal{S}|=s}\sum \frac{1}{c}\ln \underset{j\in \mathcal{S}}\sum e^{cV_{\mathbf{n},\mathcal{S},j}}}-\frac{1}{c}\ln N^K
	-\sum_{s=1}^K\sum_{\mathcal{S}\subseteq\mathcal{K}:|\mathcal{S}|=s}\frac{1}{c}\ln s\nonumber\\
	&\overset{(c)}=R_{\rm wst}^{\rm ub}(\mathbf{K},N,\mathbf{V},\mathbf{q})-\frac{1}{c}\left(\sum_{s=1}^K\sum_{\mathcal{S}\subseteq\mathcal{K}:|\mathcal{S}|=s}\ln s+K\ln N\right)\nonumber\\
	&=R_{\rm wst}^{\rm ub}(\mathbf{K},N,\mathbf{V},\mathbf{q})-\frac{1}{c}\left(\sum_{s=1}^K\binom{K}{s}\ln s+K\ln N\right)\triangleq R_{\rm wst}^{\rm lb}(\mathbf{K},N,\mathbf{V},\mathbf{q}),\label{ineqn:R_wst_lb}
\end{align}
where (a) and (b) are due to \eqref{ineqn:max_appro_lb}, and (c) is due to \eqref{ineq:relaxmodel_ub}.
Let $\mathbf{q}^*_{\rm wst}$ denote an optimal solution of Problem~\ref{P:Worst Case1}.
We have
\begin{align}
	&\quad L_{\rm wst}(\mathbf{K},N,\mathbf{V})=R_{\rm wst}(\mathbf{K},N,\mathbf{V},\mathbf{q_{\rm wst}^\dag})-R_{\rm wst}(\mathbf{K},N,\mathbf{V},\mathbf{q_{\rm wst}^*})	\nonumber\\
		 &\overset{(d)}\le R_{\rm wst}^{\rm ub}(\mathbf{K},N,\mathbf{V},\mathbf{q_{\rm wst}^\dag})-R_{\rm wst}(\mathbf{K},N,\mathbf{V},\mathbf{q_{\rm wst}^*})	
		 \overset{(e)}\le R_{\rm wst}^{\rm ub}(\mathbf{K},N,\mathbf{V},\mathbf{q_{\rm wst}^*})-R_{\rm wst}(\mathbf{K},N,\mathbf{V},\mathbf{q_{\rm wst}^*})	\nonumber\\
		 &\overset{(f)}\le R_{\rm wst}^{\rm ub}(\mathbf{K},N,\mathbf{V},\mathbf{q_{\rm wst}^*})-R_{\rm wst}^{\rm lb}(\mathbf{K},N,\mathbf{V},\mathbf{q_{\rm wst}^*})	\overset{(g)}=\frac{1}{c}\left(\sum_{i=1}^K \binom{K}{i}\ln i+K\ln N\right),\label{ineqn:L_and_G}
\end{align}
where (d) is due to \eqref{ineq:relaxmodel_ub}, (e) is due to the optimality of $\mathbf{q}_{\rm wst}^\dag$ for Problem 4, and (f) and (g) are due to \eqref{ineqn:R_wst_lb}.

Next, we prove \eqref{ineqn:T1_2}.
Denote $G_{\rm wst}(K,N)\triangleq \frac{1}{c}\left(\sum_{i=1}^K \binom{K}{i}\ln i+K\ln N\right)$.
We derive two upper bounds on $G_{\rm wst}(K,N)$ as follows:
\begin{align}
G_{\rm wst}(K,N) \overset{(h)}\le \frac{1}{c}\left(\sum_{i=1}^K\binom{K}{i}(i-1)+K\ln N\right)\overset{(i)}=\frac{1}{c}\left(K2^{K-1}-2^K+1+K\ln N\right),\label{ineqn:GAP_bound1}
\end{align}
\begin{align}
G_{\rm wst}(K,N) \overset{(j)}\le \frac{1}{c}\left(\sum_{i=1}^K\binom{K}{i}\ln K+K\ln N\right)=\frac{1}{c}\left(\left(2^K-1\right)\ln K+K\ln N\right),\label{ineqn:GAP_bound2}
\end{align}
where (h) is due to $\ln x\le x-1$, (i) is due to $\sum_{i=1}^K i\binom{K}{i}=K2^{K-1}$, and (j) is due to $\ln i\le \ln K$,  $1\le i\le K$.
By \eqref{ineqn:L_and_G}, \eqref{ineqn:GAP_bound1} and~\eqref{ineqn:GAP_bound2}, we have
\begin{align}
  &\quad L_{\rm wst}(\mathbf{K},N,\mathbf{V})\le \min\left\{\frac{1}{c}\left(K2^{K-1}-2^K+1+K\ln N\right), \frac{1}{c}\left(\left(2^K-1\right)\ln K+K\ln N\right) \right\}\nonumber\\
  &=\frac{1}{c}\Bigg \{\min\left\{\left(\frac{K}{2}-1\right)2^K+1,\left(2^K-1\right)\ln K\right\}+K\ln N\Bigg \}=\frac{1}{c}O\left(2^K\ln K+K\ln N\right). \label{ineqn:Lwst}
\end{align}
Thus, $L_{\rm wst}(\mathbf{K},N,\mathbf{V})=\frac{1}{c}O\left(2^K\ln K+K\ln N\right)$.

Therefore, we complete the proof of Theorem~\ref{T:thm1}.

\section*{Appendix C: Proof  of Theorem~\ref{T:thm2}}\label{Appendix_thm2}
We first prove \eqref{ineqn:T2_1}. 
\wqb{Similar to the proof for \eqref{ineqn:R_wst_lb} in Appendix B}, we have
\begin{align}
\quad R_{\rm avg}(\mathbf{K},N,\mathbf{V},\mathbf{q})\ge 
	R_{\rm avg}^{\rm ub}(\mathbf{K},N,\mathbf{V},\mathbf{q})-\frac{1}{c}\sum_{s=1}^K\binom{K}{s}\ln s\triangleq R_{\rm avg}^{\rm lb}(\mathbf{K},N,\mathbf{V},\mathbf{q}).\label{ineqn:R_avg_lb}
\end{align}
Let $\mathbf{q}^*_{\rm avg}$ denote an optimal solution of Problem~\ref{P:Average Case1}.
\wqb{Similar to the proof for \eqref{ineqn:L_and_G} in Appendix B}, we have
\begin{align}
	L_{\rm avg}(\mathbf{K},N,\mathbf{V})
		 &\le R_{\rm avg}^{\rm ub}(\mathbf{K},N,\mathbf{V},\mathbf{q^*_{\rm avg}})-R_{\rm avg}^{\rm lb}(\mathbf{K},N,\mathbf{V},\mathbf{q^*_{\rm avg}})
		 \overset{(a)}=\frac{1}{c}\sum_{i=1}^K \binom{K}{i}\ln i,
\end{align}
where (a) is due to \eqref{ineqn:R_avg_lb}.
Next, we prove \eqref{ineqn:T2_2}.
\wqb{Similar to the proof for \eqref{ineqn:Lwst} in Appendix B, we have}
\begin{align}
  &\quad L_{\rm avg}(\mathbf{K},N,\mathbf{V})\le \min\left\{\frac{1}{c}\left(K2^{K-1}-2^K+1\right), \frac{1}{c}(2^K-1)\ln K \right\}\nonumber\\
  &=\frac{1}{c}\min\left\{\left(\frac{K}{2}-1\right)2^K+1,\left(2^K-1\right)\ln K\right\}=\frac{1}{c}O\left(2^K\ln K\right) \label{ineqn:lemma2}.
\end{align}
Thus, $L_{\rm avg}(\mathbf{K},N,\mathbf{V})=\frac{1}{c}O\left(2^K\ln K\right)$.
Therefore, we complete the proof of Theorem~\ref{T:thm2}.
\section*{Appendix D: Proof  of Lemma~\ref{L:converse_bound_wst}}\label{Appendix_lem_converse_wst}
\wqb{For notation simplicity}, we assume that $\mathcal{L}_a=\{1,2,\dots,L_a\}$ and $M_1\le M_2\le \dots\le M_{L_a}$ (i.e., $M_{[\mathcal{L}_a,i]}=M_i, 1 \le i \le L_a$) \wqb{in the proof}.
\wqb{By the proof of Lemma 1 in~\cite{Chien_Yi_Wang2018improved}, for any $m\in\big\{1,\dots,\min\{L_a,N\}\big\}$, we have}
\begin{align}
R(\mathcal{L}_a,N,\mathbf{V},\mathbf{M},\mathbf{d}_a)&\ge\sum_{l=1}^{m} H(W_{d_l})\cdot\mathbbm{1}\big\{d_l\notin\{d_1,\dots,d_{l-1}\}\big\}\nonumber\\
	&\quad-\sum_{l=1}^m I(W_{d_l};Z_1,\dots,Z_l|W_{d_1},\dots,W_{d_{l-1}}),\label{ineqn:converse_bound_wst_proof1}	
\end{align}
where $\mathbbm{1}\big\{d_l\notin\{d_1,\dots,d_{l-1}\}\big\}$ denotes the indicator function that is 1 if $d_l$ is not in $\{d_1,\dots,d_{l-1}\}$ and is 0 otherwise.
Let $\mathcal{Q}_m^{\rm dist}$ denote the set of all ordered $m$-dimentional demand vectors $(d_1,\dots,d_m)$ with all distinct entries.
Hence, $|\mathcal{Q}_m^{\rm dist}|=\binom{N}{m}m!$.
Averaging \eqref{ineqn:converse_bound_wst_proof1} over all demand vectors $\mathbf{d}\in\mathcal{Q}_m^{\rm dist}$ yields:
\begin{align}
	R_{\rm wst}^{*}(\mathcal{L}_a,N,\mathbf{V},\mathbf{M})\ge\frac{1}{\binom{N}{m}m!}\sum_{\mathbf{d}\in\mathcal{Q}_m^{\rm dist}}\sum_{l=1}^m H(W_{d_l})\cdot\mathbbm{1}\big\{d_l\notin\{d_1,\dots,d_{l-1}\}\big\}-\sum_{l=1}^m\alpha_l,\label{ineqn:*}
\end{align}
where $\alpha_1\triangleq\frac{1}{\binom{N}{m}m!}\sum_{\mathbf{d}\in\mathcal{Q}_m^{\rm dist}}I(W_{d_1};Z_1)$, \wqb{and} $\alpha_l\triangleq\frac{1}{\binom{N}{m}m!}\sum_{\mathbf{d}\in\mathcal{Q}_m^{\rm dist}}I(W_{d_l};Z_1,\dots,Z_l|W_{d_1},\dots,W_{d_{l-1}})$, $l=2,\dots,m$.
\wqb{In the following, we derive an lower bound of the lower bound in \eqref{ineqn:*} by analyzing its two terms, respectively.}
\wqb{First, we} simplify the first term of the lower bound in \eqref{ineqn:*} as follows:
	\begin{align}
		&\quad\frac{1}{\binom{N}{m}m!}\sum_{\mathbf{d}\in\mathcal{Q}_m^{\rm dist}}\sum_{l=1}^m H(W_{d_l})\cdot\mathbbm{1}\big\{d_l\notin\{d_1,\dots,d_{l-1}\}\big\}
		=\frac{1}{\binom{N}{m}m!}m!\sum_{\mathcal{S}\subseteq\mathcal{N}:|\mathcal{S}|=m}\sum_{i\in\mathcal{S}}V_i\nonumber\\
		&=\frac{1}{\binom{N}{m}m!}m!\binom{N-1}{m-1}\sum_{i=1}^{N}V_i
		=\frac{m}{N}\sum_{i=1}^N V_i.\label{ineqn:*_1}
	\end{align}
\wqb{Next}, we derive an upper bound of the second term of the lower bound in \eqref{ineqn:*}.
\wqb{For all $1\le l \le m$, let $\tilde{\mathbf{d}}\triangleq (d_1,\dots,d_{l-1})$ and $W_{\tilde{\mathbf{d}}}\triangleq\{W_{d_1},\dots,W_{d_{l-1}}\}$.
By the definition of $\alpha_l,\ 2\le l \le m$, we have}
\begin{align}
	\alpha_l
		    &\overset{(a)}{\le}\frac{1}{l!\binom{N}{l}}\sum_{\tilde{\mathbf{d}}\in{\mathcal{Q}_{l-1}^{\rm dist}}}I\big(\{W_j:j\in\mathcal{N}\backslash\tilde{\mathbf{d}} \};Z_1,\dots,Z_l|W_{\tilde{\mathbf{d}}}\big)
		    \overset{(b)}{\le}\frac{1}{l!\binom{N}{l}}\sum_{\tilde{\mathbf{d}}\in{\mathcal{Q}_{l-1}^{\rm dist}}}\sum_{i=1}^l M_{i}
		    =\frac{\sum_{i=1}^lM_{i}}{N-l+1},\nonumber
\end{align}
where (a) is due to (88) \wqb{in the proof of Lemma 3} in~\cite{Chien_Yi_Wang2018improved} and (b) is due to \wqb{$I\big(\{W_j:j\in\mathcal{N}\backslash\tilde{\mathbf{d}} \};Z_1,\dots,Z_l|W_{\tilde{\mathbf{d}}}\big)\le \sum_{i=1}^l H(Z_i)$, $\tilde{\mathbf{d}}\in{\mathcal{Q}_{l-1}^{\rm dist}}$ and} $H(Z_i)\le M_{i}$, $1 \le i \le l$.
\wqb{Similarly, we have $\alpha_1\le \frac{M_1}{N}$.
Thus, we have}  
\begin{align}
	\sum_{l=1}^m \alpha_l\le\sum_{l=1}^m \frac{\sum_{i=1}^{l}M_{i}}{N-l+1}.\label{ineqn:sum_alpha1}
\end{align}
\wqb{In addition, by the definition of $\alpha_l,\ 1\le l\le m$, we have}
\begin{align}
	&\binom{N}{m}m!\sum_{l=1}^m\alpha_l
	\overset{(c)}{\le} m!\binom{N}{m}\frac{m}{N}
	I(W_1,\dots,W_N;Z_1,\dots,Z_m)
	\overset{(d)}{\le} m!\binom{N}{m}\frac{m}{N}\sum_{i=1}^m M_{i},\label{ineqn:sum_alpha2}
\end{align}
where (c) is due to (91) \wqb{in the proof of Lemma 3} in~\cite{Chien_Yi_Wang2018improved} and (d) is due to \wqb{$I(W_1,\dots,W_N;Z_1,\dots,Z_m)$ $\le \sum_{i=1}^m H(Z_i)$} and $H(Z_i)\le M_{i}$, $1 \le i \le m$.
By \eqref{ineqn:sum_alpha1} and \eqref{ineqn:sum_alpha2}, we have
\begin{align}
	\sum_{l=1}^m\alpha_l\le\min\left\{\sum_{l=1}^m\frac{\sum_{i=1}^l M_{i}}{N-l+1},\frac{m}{N}\sum_{i=1}^m M_{i}\right\}	.\label{ineqn:converse_bound_sum_alpha}
\end{align}
\wqb{Finally}, by \eqref{ineqn:*}, \eqref{ineqn:*_1}, \eqref{ineqn:converse_bound_sum_alpha} and optimizing over all possible choices of $m\in\big\{1,\dots,\min\{L_a,N\}\big\}$, we \wqb{can show} Lemma \ref{L:converse_bound_wst}.
\section*{Appendix E: Proof  of Lemma~\ref{L:converse_bound_avg}}\label{Appendix_lem_converse_avg}
Let $R_{\rm avg,unif}^*(\mathcal{S},N^{'},\mathbf{V},\mathbf{M})$ denote the minimum average load of the shared link \wqb{for the} active users in $\mathcal{S}$ when the library size is $N^{'}$, file sizes are $(V_n)_{n=1}^{N^{'}}$ and cache sizes are $\mathbf{M}$ \wqb{and the file popularity distribution follows the uniform distribution}. 
\wqb{Following the proof of Theorem 2 in\cite{ji2015order}}, we have
\begin{align}
	R_{\rm avg}^{*}&(\mathcal{L}_a,N,\mathbf{V},\mathbf{M})
		 \ge \max_{N^{'}\in\{1,\dots,N\}}\left\{\sum_{i=1}^{L_a}\sum_{\mathcal{S}\subseteq\mathcal{L}_a:|\mathcal{S}|=i}\big(N^{'}p_{N^{'}}\big)^i\big(1-N^{'}p_{N^{'}}\big)^{L_a-i} R_{\rm avg,unif}^*(\mathcal{S},N^{'},\mathbf{V},\mathbf{M})\right\}\nonumber\\
		 &=\max_{N^{'}\in\{1,\dots,N\}}\left\{\sum_{i=1}^{L_a} \left(N^{'}p_{N^{'}}\right)^i \left(1-N^{'}p_{N^{'}}\right)^{L_a-i}
		 \sum_{\substack{\mathcal{S}\subseteq\mathcal{L}_a: |\mathcal{S}|=i}} R_{\rm avg,unif}^{*}(\mathcal{S},N^{'},\mathbf{V},\mathbf{M})\right\}.\label{ineqn:appendix_genie-aided}
\end{align}

Next, we \wqb{derive a lower bound on $R_{\rm avg,unif}^{*}(\mathcal{L}_a,N,\mathbf{V},\mathbf{M})$}.
\wqb{Without loss of generality, we assume that $\mathcal{L}_a=\{1,2,\dots,L_a\}$ and $M_1\le M_2\le \dots\le M_{L_a}$ (i.e., $M_{[\mathcal{L}_a,i]}=M_i, 1 \le i \le L_a$).}
Fix $m\in\big\{1,\dots,\min\{L_a,N\}\big\}$.
Let $\mathcal{Q}_m^{\rm rep}$ denote the set of all ordered $m$-dimensional demand vectors, where repetitions are allowed.
Hence, $|\mathcal{Q}_m^{\rm rep}|=N^m$.
\wqb{Assume that the random requests of all $L_a$ users, i.e., $\mathbf{D}_a\triangleq (D_1,\dots,D_{L_a})$, follow the uniform distribution.}
Averaging \eqref{ineqn:converse_bound_wst_proof1} over all demand vectors $\mathbf{d}\in\mathcal{Q}_{L_a}^{\rm rep}$ yields:
\begin{align}
	R_{\rm avg,unif}^{*}(\mathcal{L}_a,N,\mathbf{V},\mathbf{M})
	\ge\mathbb{E}\Bigg[\sum_{l=1}^m H(W_{D_l})\cdot\mathbbm{1}\big\{D_l\notin\{D_1,\dots,D_{l-1}\}\big\}\Bigg]-\sum_{l=1}^m\beta_l,\label{ineqn:*_avg}
\end{align}
where \wqb{the expectation is with respect to $\mathbf{D}_a$}, $\beta_1\triangleq I(W_{D_1};Z_1|\mathbf{D}_a)$ and $\beta_l\triangleq I(W_{D_l};Z_1,\dots,Z_l|W_{D_1},$ $\dots,W_{D_{l-1}},\mathbf{D}_a)$, $l=2,\dots,m$.
\wqb{In the following, we derive a lower bound of the lower bound on \eqref{ineqn:*_avg} by analyzing its two terms, respectively.}
\wqb{First, we} simplify the first term of the lower bound derived in \eqref{ineqn:*_avg} as follows:
	\begin{align}
		&\quad\mathbb{E}\Bigg[\sum_{l=1}^m H(W_{D_l})\cdot\mathbbm{1}\big\{D_l\notin\{D_1,\dots,D_{l-1}\}\big\}\Bigg]\nonumber\\
		&=\sum_{j=1}^m\sum_{\mathcal{S}\subseteq\mathcal{N}:|\mathcal{S}|=j}\Pr\Bigg(\sum_{l=1}^l H(W_{D_l})\cdot\mathbbm{1}\big\{D_l\notin\{D_1,\dots,D_{l-1}\}\big\}
		 =\sum_{i\in\mathcal{S}} V_i \Bigg)\sum_{i\in\mathcal{S}}V_i\nonumber\\
		&=\sum_{j=1}^m\sum_{\mathcal{S}\subseteq\mathcal{N}:|\mathcal{S}|=j}\frac{j!\stirling{m}{j}}{N^m}\sum_{i\in\mathcal{S}}V_i
		=\sum_{j=1}^m\frac{j!\stirling{m}{j}}{N^m}\binom{N-1}{j-1}\sum_{i=1}^N V_i,\label{ineqn:*_2}
	\end{align}
\wqb{Next,} we derive an upper bound on the second term of the lower bound derived in \eqref{ineqn:*_avg}.
\wqb{By the definition of $\beta_l,\ 2\le l \le m$, we have}
\begin{align}
	&\beta_l
		    \overset{(a)}{\le}\frac{1}{N^l}\sum_{\tilde{\mathbf{d}}\in{\mathcal{Q}_{l-1}^{\rm rep}}}I\big(W_1,\dots,W_N;Z_1,\dots,Z_l|W_{\tilde{\mathbf{d}}}\big)
		    \overset{(b)}{\le} \frac{1}{N^l}\sum_{\tilde{\mathbf{d}}\in{\mathcal{Q}_{l-1}^{\rm rep}}}\sum_{i=1}^l M_{i}
		    =\frac{\sum_{i=1}^l M_{i}}{N},\nonumber
\end{align}
where (a) is due to (94) in the proof of Lemma 5 in~\cite{Chien_Yi_Wang2018improved} and (b) is due to \wqb{$I\big(W_1,\dots,W_N;Z_1,\dots,Z_l|W_{\tilde{\mathbf{d}}}\big)$ $\le \sum_{i=1}^l H(Z_l)$, $\tilde{\mathbf{d}}\in{\mathcal{Q}_{l-1}^{\rm rep}}$} and $H(Z_i)\le M_{i}$, $1 \le i \le l$.
\wqb{Similarly, we have $\beta_1\le\frac{M_1}{N}$}.
\wqb{Thus}, we have
\begin{align}
	\sum_{l=1}^m \beta_l\le\sum_{l=1}^m \frac{\sum_{i=1}^{l}M_{i}}{N}.\label{ineqn:sum_beta1}
\end{align}
\wqb{In addition, by the definition of $\beta_l,\ 1\le l\le m$,}
\begin{align}
	\sum_{l=1}^m\beta_l
	&\overset{(c)}{\le}\sum_{i=1}^m \Pr\big(\kappa_{\mathbf{D}_a}(m)=i\big)\frac{i}{N}I\big(W_1,\dots,W_N;Z_1,\dots,Z_m\big)\nonumber\\
	&\overset{(d)}{\le}  \sum_{i=1}^m \Pr \big(\kappa_{\mathbf{D}_a}(m)=i\big)\cdot i\cdot\frac{\sum_{i=1}^m M_{i} }{N}
	=\mathbb{E}\big[\kappa_{\mathbf{D}_a}(m)\big] \frac{\sum_{i=1}^m M_{i}}{N}, \label{ineqn:sum_beta2}
\end{align}
where  (c) is due to (96) \wqb{in the proof of Lemma 5}  in~\cite{Chien_Yi_Wang2018improved} and (d) is due to \wqb{$I\big(W_1,\dots,W_N;Z_1,\dots,Z_m\big)$ $\le \sum_{i=1}^m H(Z_i)$} and $H(Z_i)\le M_{i}$, $1 \le i \le m$.
\wqb{Here $\kappa_{\mathbf{D}_a}(m)$ denotes the number of distinct demands for users $1,\dots,m$.}
By \eqref{ineqn:sum_beta1} and \eqref{ineqn:sum_beta2}, we have
	\begin{align}
	\sum_{l=1}^m \beta_l \le\min\left\{\sum_{l=1}^m\frac{\sum_{i=1}^l M_{i}}{N},\mathbb{E}\big[\kappa_{\mathbf{D}_a}(m)\big] \frac{\sum_{i=1}^m M_{i}}{N}\right\}.\label{ineqn:converse_bound_sum_beta}
	\end{align}
\wqb{By Lemma 4 in~\cite{Chien_Yi_Wang2018improved}, we have}
	\begin{align}
		\mathbb{E}\big[\kappa_{\mathbf{D}_a}(m)\big]=N\Big(1-\Big(1-\frac{1}{N}\Big)^m\Big)	.\label{eqn:converse_bound_avg_proof_E_kappa}
	\end{align}			
\wqb{Finally,} by \eqref{ineqn:*_avg}, \eqref{ineqn:*_2}, \eqref{ineqn:converse_bound_sum_beta}, \eqref{eqn:converse_bound_avg_proof_E_kappa} and optimizing over all possible choices of $m\in\big\{1,\dots,\min\{L_a,N\}\big\}$, we have
\begin{align}
R_{\rm avg,unif}^{*}(\mathcal{L}_a,N,\mathbf{V},\mathbf{M})\ge R_{\rm avg,unif}^{\rm lb}(\mathcal{L}_a,N,\mathbf{V},\mathbf{M}).	\label{ineqn:converse_bound_unif_uniflb}
\end{align}
Therefore, by \eqref{ineqn:appendix_genie-aided} and \eqref{ineqn:converse_bound_unif_uniflb}, we \wqb{can show} Lemma 4.
\bibliography{myReferences}
\end{document}